\newtheorem{theorem}{Theorem}
\newtheorem{remark}{Remark}
\newtheorem{lemma}{Lemma}
\begin{document}

\title{Euclidean and Hermitian Hulls of MDS Codes and Their Applications to EAQECCs\thanks{The research of W. Fang and F.-W. Fu  is supported in part by the National
Natural Science Foundation of China (Grant Nos. 61971243, 61571243
and 61771273), the Nankai Zhide Foundation, and the Research Fund of
PCL Future Regional Network Facilities for Large-Scale Experiments and
Applications (PCL2018KP001). The research of L. Li and S. Zhu is supported in part by the National Natural Science Foundation of China (Grant No. 61772168).}}
\author{\small  Weijun Fang\thanks{Corresponding Author} $^{,1,2,3}$, Fang-Wei Fu$^{3,4}$, Lanqiang Li$^{5}$, Shixin Zhu$^{5}$\\
\scriptsize $^1$  \emph{Shenzhen International Graduate School, Tsinghua
University, Shenzhen, P.R.China}\\
\scriptsize $^2$  \emph{ PCL Research Center of
Networks and Communications, Peng Cheng Laboratory, Shenzhen,
P.R.China}\\
\scriptsize $^3$  \emph{Chern Institute of Mathematics and LPMC, Nankai University, Tianjin, P.R.China}\\
\scriptsize $^4$  \emph{Tianjin Key Laboratory of Network and Data Security Technology,
Nankai University, Tianjin, P.R.China}\\
\scriptsize $^5$  \emph{School of Mathematics, Hefei University of
			Technology, Hefei, Anhui, P.R.China}\\
\scriptsize \emph{E-mail}: nankaifwj@163.com, fwfu@nankai.edu.cn, lilanqiang716@126.com, zhushixinmath@hfut.edu.cn}
\date{}
\maketitle
\thispagestyle{empty}
\begin{abstract}
In this paper, we construct several classes of \emph{maximum distance separable} (MDS) codes via generalized Reed-Solomon (GRS) codes and extended
GRS codes, where we can determine the dimensions of their Euclidean hulls or Hermitian hulls. It turns out that the dimensions of Euclidean hulls or Hermitian hulls of the codes in our constructions can take all or almost all possible values. As a consequence, we can apply our results to entanglement-assisted quantum error-correcting codes (EAQECCs) and obtain several new families of MDS EAQECCs with flexible parameters. The required number of maximally entangled states of these MDS EAQECCs can take all or almost all possible values. Moreover, several new classes of $q$-ary MDS EAQECCs of length $n >q+1$ are also obtained.
\end{abstract}

\small\textbf{Keywords:} Linear codes; hull; Hermitian hull; MDS codes; generalized Reed-Solomon codes; entanglement-assisted quantum error-correcting codes (EAQECCs)

\maketitle

%-------------------------------------------------------------------------------
\section{Introduction}

Let $\mathcal{C}$ be a linear code over a finite field, and let $\mathcal{C^{\perp}}$ be the dual code of $\mathcal{C}$ with respect to certain inner product, such as Euclidean inner product and Hermitian inner product. The hull of $\mathcal{C}$ is just defined as the intersection $Hull(\mathcal{C})=\mathcal{C}\bigcap \mathcal{C^{\perp}}$. Some research topics in coding theory are closely related to the properties of the hull of a linear code. One interesting problem in coding theory is that to decide whether
two matrices generate equivalent linear codes and compute the permutation of two given equivalent linear codes (\cite{L79,PR97}). In \cite{L82,L91,S97,S00}, the authors provided some algorithms for these computations whose complexity is determined by the dimension of the Euclidean hull of linear codes. Some properties of the hull of cyclic codes and negacyclic codes were also studied in \cite{S03} and \cite{SJLU15}.

It is worth mentioning that two special cases of the hulls of linear codes are of much interest. One is that $Hull(\mathcal{C}) = \{0\}$, in which $\mathcal{C}$ is called a linear complementary dual (LCD) code. In \cite{M92}, Massey first introduced this class of codes and proved that there exist asymptotically good LCD codes. A practical application of binary LCDs against side-channel attacks (SCAs) and fault injection
attacks (FIAs) was investigated by Carlet \emph{et al.} in \cite{BCCG14} and \cite{CG14}. The study of LCD codes is thus
becoming a hot research topic in coding theory (\cite{J17, L18,LDL17,CL18,CMTQP18}). A surprising result was given in \cite{CMTQP18}, which proved that any linear code over
$\mathbb{F}_{q}$ $(q > 3)$ is equivalent to a Euclidean LCD code and any linear
code over $\mathbb{F}_{q^{2}}$ $(q > 2)$ is equivalent to a Hermitian LCD code. The other case is that $Hull(\mathcal{C}) = \mathcal{C}$ (resp. $\mathcal{C}^{\perp}$). Such codes are called self-orthogonal (resp. dual containing) codes. Calderbank \emph{et al.} \cite{CS96} and  Steane \cite{S96} presented an effective mathematical method to construct good quantum stabilizer  codes
from classical self-orthogonal codes (or dual containing codes) over finite fields. Since then, several families of quantum stabilizer codes have been constructed by classical linear codes with certain self-orthogonality.

In \cite{BDH06}, Brun \emph{et al.} introduced entanglement-assisted quantum error-correcting codes (EAQECCs), which include the standard quantum stabilizer codes as a special case. They showed that if pre-shared entanglement between the encoder and decoder is available, the EAQECCs can be constructed via classical linear codes without self-orthogonality. Moreover, an EAQECC is MDS if and only if the corresponding classical linear code is MDS. However, it is not easy to determine the number of shared pairs that required to construct an EAQECC. Several classes of MDS EAQECCs had been constructed with some fixed numbers of shared pairs (\cite{QZ18, FCX16, LLGML18, LZLK19, CZK18, LLLM18, K19, LC19}).  Guenda \emph{et al.} \cite{GJG18} provided the relation between this number and the dimension of the hull of classical linear codes. Therefore, it is important to study the hull of linear codes, in particular for MDS codes. Very recently, Luo \emph{et al.} \cite{LCC} presented several classes of GRS and extended GRS codes with Euclidean hulls of arbitrary dimensions and constructed some families of $q$-ary MDS EAQECCs with length $n \leq q+1$. In \cite{GGJT18}, Guenda \emph{et al.} investigated the $\ell$-intersection pair of linear codes, which is a generalization of linear complementary pairs of codes. And then, they completely determined the $q$-ary MDS EAQECCs of length $n \leq q+1$ for all possible parameters.

In this paper, we construct several MDS codes by utilizing GRS codes and extended GRS codes, and determine the dimensions of their Euclidean or Hermitian hulls. More precisely, we firstly give some new classes of MDS codes with Euclidean hulls of arbitrary dimensions whose parameters are not covered in \cite{LCC}. Secondly, several new classes of MDS codes with Hermitian hulls of arbitrary dimensions are presented.  Finally, we apply these results to construct new MDS EAQECCs. In particular, we provide a different and simpler method to construct the $q$-ary MDS EAQECCs of length $n \leq q$ for all possible parameters, which were first obtained in \cite{GGJT18}. Furthermore, we obtain several new classes of $q$-ary MDS EAQECCs with length larger than $q+1$ and the required number of maximally entangled states can take all or almost all possible values.

The rest of this paper is organized as follows. In Section 2, we briefly recall some basic notions and properties of GRS codes and extended GRS codes. In Section 3, we present our constructions
of MDS codes with Euclidean or Hermitian hulls of arbitrary dimensions. Several classes of MDS EAQECCs are obtained in Section 4. We conclude this paper in Section 5.

\section{Preliminaries}\label{sec2}
In this section, we introduce the basic notions of Euclidean and Hermitian hulls of a linear code, and provide some related properties of GRS codes and extended GRS codes.

Throughout this paper, we always assume that $p$ is a prime and $q=p^{m}$, where $m$ is a positive integer. Let $\mathbb{F}_{q}$ be the finite field with $q$ elements and $\mathbb{F}^{*}_{q}=\mathbb{F}_{q}\setminus \{0\}$. For any two vectors $\textbf{u}=(u_{1}, u_{2}, \ldots, u_{n}), \textbf{v}=(v_{1}, v_{2}, \ldots, v_{n}) \in \mathbb{F}_{q}^{n},$ the Euclidean inner product $\textbf{u}$ and $\textbf{v}$ is defined by
\[ \langle \textbf{u}, \textbf{v}\rangle_{E} \triangleq\sum_{i=1}^{n}u_{i}v_{i}. \]

Let $\mathcal{C}$ be an $\mathbb{F}_{q}$-linear code of length $n$, the Euclidean dual code of $\mathcal{C}$ is defined as
\[ \mathcal{C}^{\perp_{E}} \triangleq\{\textbf{u} \in \mathbb{F}_{q}^{n} :\langle \textbf{u}, \textbf{v}\rangle_{E}=0 \textnormal{ for all } \textbf{v} \in \mathcal{C} \}. \]

Similarly, for any two vectors $\textbf{u}, \textbf{v} \in \mathbb{F}_{q^{2}}^{n}$, the Hermitian inner product of $\textbf{u}$ and $\textbf{v}$ is defined as
\[ \langle \textbf{u}, \textbf{v}\rangle_{H} \triangleq\sum_{i=1}^{n}u_{i}v_{i}^{q}. \]

Let $\mathcal{C}$ be an $\mathbb{F}_{q^{2}}$-linear code of length $n$. We can similarly define the Hermitian dual code of $\mathcal{C}$ as follows:
\[ \mathcal{C}^{\perp_{H}} \triangleq\{\textbf{u} \in \mathbb{F}_{q^{2}}^{n} :\langle \textbf{u}, \textbf{v}\rangle_{H}=0 \textnormal{ for all } \textbf{v} \in \mathcal{C} \}.\]

It is worth mentioning that the base field should be $\mathbb{F}_{q^{2}}$ when we consider the Hermitian case in this paper.

The Euclidean hull (resp. Hermitian hull) of $\mathcal{C}$ is just the intersection $\mathcal{C} \bigcap \mathcal{C}^{\perp_{E}}$ (resp. $\mathcal{C} \bigcap \mathcal{C}^{\perp_{H}}$), which we denote by $Hull_{E}(\mathcal{C})$ (resp. $Hull_{H}(\mathcal{C})$). It is obvious that $Hull_{E}(\mathcal{C})=Hull_{E}(\mathcal{C}^{\perp_{E}})$ (resp. $Hull_{H}(\mathcal{C})=Hull_{H}(\mathcal{C}^{\perp_{H}})$). If $Hull_{E}(\mathcal{C})=0$ (resp. $Hull_{H}(\mathcal{C})=0$), $\mathcal{C}$ is called a Euclidean LCD (resp. Hermitian LCD) code. If $Hull_{E}(\mathcal{C})=\mathcal{C}$ (resp. $Hull_{H}(\mathcal{C})=\mathcal{C}$), $\mathcal{C}$ is called a self-orthogonal (resp. Hermitian self-orthogonal) code.

In general, it is not an easy task to determine the dimension of the Euclidean (or Hermitian) hull of a linear code. In this paper, we will give several constructions of MDS codes and determine the dimensions of their Euclidean (or Hermitian) hulls. Let's recall some basic notions of GRS codes and extended GRS codes. Let $a_{1}, a_{2}, \ldots, a_{n}$ be $n$ distinct elements of $\mathbb{F}_{q}$ and $v_{1}, v_{2}, \ldots, v_{n}$ be $n$ nonzero elements of $\mathbb{F}_{q}$. Put $\textbf{a}= (a_{1}, \ldots, a_{n})$ and $\textbf{v}=(v_{1}, \ldots, v_{n})$. The generalized Reed-Solomon (GRS for short) code over $\mathbb{F}_{q}$ associated to $\textbf{a}$ and $\textbf{v}$ is defined as follows:
\begin{eqnarray*}
GRS_{k}(\textbf{a}, \textbf{v}) &\triangleq& \{(v_{1}f(a_{1}), \ldots, v_{n}f(a_{n})) \\
                                & &: f(x) \in \mathbb{F}_{q}[x], \deg(f(x)) \leq k-1 \}.
\end{eqnarray*}
It is well known that the code $GRS_{k}(\textbf{a}, \textbf{v})$ is an $[n, k, n - k + 1]$-MDS code.

The extended GRS code $GRS_{k}(\textbf{a}, \textbf{v},\infty)$ associated to $\textbf{a}$ and $\textbf{v}$ is defined by
\begin{eqnarray*}
% \nonumber to remove numbering (before each equation)
  GRS_{k}(\textbf{a}, \textbf{v},\infty)&\triangleq& \{(v_{1}f(a_{1}), \ldots, v_{n}f(a_{n}),f_{k-1}) \\
    & & :f(x) \in \mathbb{F}_{q}[x], \deg(f(x)) \leq k-1 \},
\end{eqnarray*}
where $f_{k-1}$ stands for the coefficient of $x^{k-1}$ in $f(x)$. It is easy to show that $GRS_{k}(\textbf{a}, \textbf{v},\infty)$ is an $[n+1, k, n - k + 2]$-MDS code (see \cite[Theorem 5.3.4]{HP03}).  For $1 \leq i \leq n$, we denote
\begin{equation}\label{1}
  u_{i}\triangleq\prod_{1 \leq j \leq n, j \neq i}(a_{i}-a_{j})^{-1},
\end{equation}
which will be used frequently in this paper.

In \cite{CL18}, the authors presented a sufficient and necessary condition under which a codeword $\textbf{c}$ of $GRS_{k}(\textbf{a}, \textbf{v})$ (resp. $GRS_{k}(\textbf{a}, \textbf{v},\infty)$) is contained in its dual code $GRS_{k}(\textbf{a}, \textbf{v})^{\perp_{E}}$ (resp. $GRS_{k}(\textbf{a}, \textbf{v},\infty)^{\perp_{E}}$).

\begin{lemma} (\cite[Lemma III.1]{CL18})\label{lem2.1}
A codeword $\textbf{c}=(v_{1}f(a_{1}), v_{2}f(a_{2}), \ldots, v_{n}f(a_{n}))$ of $GRS_{k}(\textbf{a}, \textbf{v})$ is contained in $GRS_{k}(\textbf{a}, \textbf{v})^{\perp_{E}}$ if and only if there exists a polynomial $g(x)\in \mathbb{F}_{q}[x]$ with $\deg(g(x)) \leq n-k-1$, such that
\begin{equation*}
  \begin{split}
      & (v_{1}^{2}f(a_{1}), v_{2}^{2}f(a_{2}), \ldots, v_{n}^{2}f(a_{n})) \\
       & =(u_{1}g(a_{1}), u_{2}g(a_{2}),\ldots, u_{n}g(a_{n})).
   \end{split}
\end{equation*}
\end{lemma}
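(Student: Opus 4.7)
The plan is to reduce the statement to the well-known dual description of GRS codes and then match components entry-by-entry.

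First I would recall the classical fact (see, e.g., \cite{HP03}) that the Euclidean dual of a GRS code is again a GRS code on the same evaluation points: precisely,
\[
GRS_{k}(\textbf{a}, \textbf{v})^{\perp_{E}} = GRS_{n-k}(\textbf{a}, \textbf{v}'),
\]
where $\textbf{v}' = (v_{1}', \ldots, v_{n}')$ with $v_{i}' = u_{i}/v_{i}$ and $u_{i}$ as defined in equation~(\ref{1}). This identity is a direct consequence of the Lagrange interpolation identity $\sum_{i=1}^{n} u_{i} a_{i}^{r} = 0$ for $0 \leq r \leq n-2$, applied to pairings of polynomials of degrees $\leq k-1$ and $\leq n-k-1$.

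With this in hand, a vector $\textbf{c}$ lies in $GRS_{k}(\textbf{a},\textbf{v})^{\perp_{E}}$ if and only if there exists a polynomial $g(x) \in \mathbb{F}_{q}[x]$ with $\deg(g(x)) \leq n-k-1$ such that
\[
\textbf{c} = \left( \frac{u_{1}}{v_{1}} g(a_{1}), \ldots, \frac{u_{n}}{v_{n}} g(a_{n}) \right).
\]
Now, if $\textbf{c} = (v_{1}f(a_{1}), \ldots, v_{n}f(a_{n})) \in GRS_{k}(\textbf{a}, \textbf{v})$ also lies in the dual, then equating the two expressions coordinate by coordinate gives $v_{i} f(a_{i}) = (u_{i}/v_{i}) g(a_{i})$, and multiplying through by $v_{i}$ (which is nonzero by assumption) yields the stated identity $v_{i}^{2} f(a_{i}) = u_{i} g(a_{i})$ for all $1 \leq i \leq n$. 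The converse is immediate by reversing this step: given such a $g$, the same computation shows $\textbf{c}$ coincides with the evaluation form of an element of $GRS_{n-k}(\textbf{a}, \textbf{v}')$.

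There is essentially no hard step here once the dual characterization is in hand, so the only thing to be careful about is ensuring the degree bound $\deg(g(x)) \leq n-k-1$ matches the dimension $n-k$ of the dual code, and that the rescaling by $v_{i}$ (permissible because $v_{i} \neq 0$) is applied correctly on both sides. The main conceptual input is thus the dual GRS formula, and the rest is a one-line bookkeeping argument.
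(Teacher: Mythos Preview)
Your proof is correct. The paper itself does not prove this lemma; it simply cites it from \cite{CL18}, so there is no in-paper argument to compare against. Your route via the standard identity $GRS_{k}(\textbf{a},\textbf{v})^{\perp_{E}} = GRS_{n-k}(\textbf{a},\textbf{v}')$ with $v_{i}' = u_{i}/v_{i}$ is exactly the natural one, and the coordinate-matching step is sound since each $v_{i}$ is nonzero.
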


\begin{lemma}(\cite[Lemma III.2]{CL18}\footnote{Indeed, \cite[Lemma III.2]{CL18} only considered the case of $n=q$. It can similarly prove that the lemma holds for general $n \leq q$.})\label{lem2.2}
A codeword $\textbf{c}=(v_{1}f(a_{1}), \ldots, v_{n}f(a_{n}), f_{k-1})$ of $GRS_{k}(\textbf{a}, \textbf{v}, \infty)$ is contained in $GRS_{k}(\textbf{a}, \textbf{v}, \infty)^{\perp_{E}}$ if and only if there exists a polynomial $g(x) \in \mathbb{F}_{q}[x]$ with $\deg(g(x)) \leq n-k$, such that
\begin{equation*}
   \begin{split}
      & (v_{1}^{2}f(a_{1}), v_{2}^{2}f(a_{2}), \ldots, v_{n}^{2}f(a_{n}), f_{k-1}) \\
       & =(u_{1}g(a_{1}), u_{2}g(a_{2}), \ldots, u_{n}g(a_{n}),-g_{n-k}).
   \end{split}
\end{equation*}
\end{lemma}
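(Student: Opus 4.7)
The plan is to explicitly identify the Euclidean dual of $GRS_{k}(\mathbf{a},\mathbf{v},\infty)$ and then translate the membership condition for the codeword $\mathbf{c}$ into the polynomial identity in the statement. More precisely, I claim that $GRS_{k}(\mathbf{a},\mathbf{v},\infty)^{\perp_{E}}$ consists exactly of the vectors of the form $(u_{1}g(a_{1})/v_{1},\ldots, u_{n}g(a_{n})/v_{n}, -g_{n-k})$ with $g(x)\in\mathbb{F}_{q}[x]$ and $\deg g \leq n-k$. Granted this, membership of $\mathbf{c}=(v_{1}f(a_{1}),\ldots,v_{n}f(a_{n}),f_{k-1})$ in the dual is equivalent to the coordinate-wise equalities $v_{i}f(a_{i})=u_{i}g(a_{i})/v_{i}$ for $1\leq i\leq n$ together with $f_{k-1}=-g_{n-k}$, which after multiplying through by $v_{i}$ is exactly the stated system.

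The key tool for proving the claim is the Lagrange-interpolation identity
\[
\sum_{i=1}^{n}u_{i}\, h(a_{i}) \;=\; [x^{n-1}]\, h(x),
\]
valid for any $h(x)\in\mathbb{F}_{q}[x]$ with $\deg h \leq n-1$; this is immediate from the definition $u_{i}=\prod_{j\neq i}(a_{i}-a_{j})^{-1}$ and Lagrange's formula. Applying it with $h(x)=f(x)g(x)$, whose degree is at most $(k-1)+(n-k)=n-1$ and whose coefficient of $x^{n-1}$ is exactly $f_{k-1}g_{n-k}$, the Euclidean inner product of $(v_{1}f(a_{1}),\ldots,v_{n}f(a_{n}),f_{k-1})$ with any such candidate vector $(u_{1}g(a_{1})/v_{1},\ldots, u_{n}g(a_{n})/v_{n}, -g_{n-k})$ collapses to $f_{k-1}g_{n-k}-f_{k-1}g_{n-k}=0$. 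Hence the candidate set lies inside the dual.

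A dimension count closes the argument: the $\mathbb{F}_{q}$-linear map $g(x)\mapsto(u_{1}g(a_{1})/v_{1},\ldots, u_{n}g(a_{n})/v_{n},-g_{n-k})$ on $\{g\in\mathbb{F}_{q}[x]:\deg g\leq n-k\}$ is injective, because a nonzero polynomial of degree $\leq n-k<n$ cannot vanish at all of the $n$ distinct points $a_{i}$, so its image has dimension $n+1-k=\dim GRS_{k}(\mathbf{a},\mathbf{v},\infty)^{\perp_{E}}$. Therefore the inclusion is an equality, the claim holds, and the lemma follows.

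The main place where one must be careful — more a matter of bookkeeping than a genuine obstacle — is the extra ``infinity'' coordinate: the minus sign in front of $g_{n-k}$ is precisely what cancels the one nontrivial contribution coming from the Lagrange identity applied to $fg$, and the degree bound $\deg g\leq n-k$ (as opposed to $\leq n-k-1$ in Lemma~\ref{lem2.1}) is exactly the freedom that allows $g_{n-k}$ to match the free parameter $f_{k-1}$. The footnote's concern that \cite{CL18} only treated $n=q$ is automatically addressed: the Lagrange identity requires only that the $a_{i}$ be $n$ distinct elements of $\mathbb{F}_{q}$, not that they exhaust the field.
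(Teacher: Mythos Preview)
Your argument is correct. The paper itself does not prove Lemma~\ref{lem2.2}; it simply cites \cite[Lemma~III.2]{CL18} and remarks in a footnote that the original proof for $n=q$ extends verbatim to arbitrary $n\leq q$. Your proof supplies exactly that extension: the Lagrange identity $\sum_{i=1}^{n}u_{i}h(a_{i})=[x^{n-1}]h(x)$ for $\deg h\leq n-1$ requires only that $a_{1},\ldots,a_{n}$ be distinct, and together with the dimension count it gives the explicit description of $GRS_{k}(\mathbf{a},\mathbf{v},\infty)^{\perp_{E}}$ from which the lemma is immediate. This is the standard approach and is essentially what \cite{CL18} does in the special case.
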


Similar results for the Hermitian case were obtained in \cite{FF18}.

\begin{lemma}(\cite[Lemma 6]{FF18})\label{lem2.3}
A codeword $\textbf{c}=(v_{1}f(a_{1}), v_{2}f(a_{2}), \ldots, v_{n}f(a_{n}))$ of $GRS_{k}(\textbf{a}, \textbf{v})$ is contained in $GRS_{k}(\textbf{a}, \textbf{v})^{\perp_{H}}$ if and only if there exists a polynomial $g(x)\in \mathbb{F}_{q^{2}}[x]$ with $\deg(g(x)) \leq n-k-1$, such that
\begin{equation*}
  \begin{split}
      & (v_{1}^{q+1}f^{q}(a_{1}), v_{2}^{q+1}f^{q}(a_{2}), \ldots, v_{n}^{q+1}f^{q}(a_{n})) \\
       & =(u_{1}g(a_{1}), u_{2}g(a_{2}),\ldots, u_{n}g(a_{n})).
   \end{split}
\end{equation*}
\end{lemma}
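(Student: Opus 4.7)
My plan is to mimic the Euclidean argument of Lemma \ref{lem2.1} after converting the Hermitian orthogonality relation into a Euclidean one via the Frobenius $x \mapsto x^q$. First I would unpack what $\mathbf{c} \in GRS_k(\mathbf{a},\mathbf{v})^{\perp_H}$ means coordinate-wise: writing an arbitrary codeword of $GRS_k(\mathbf{a},\mathbf{v})$ as $(v_i h(a_i))_{i=1}^n$ with $\deg h \le k-1$, the condition $\langle \mathbf{c}, (v_i h(a_i))\rangle_H = 0$ becomes
\[
\sum_{i=1}^{n} v_i^{q+1} f(a_i)\, h(a_i)^q = 0 \quad \text{for every } h \in \mathbb{F}_{q^2}[x] \text{ with } \deg h \le k-1.
\]

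Next I would apply the $q$-th power Frobenius to this scalar identity. Since everything lives in $\mathbb{F}_{q^2}$, we have $v_i^{(q+1)q} = v_i^{q^2+q} = v_i^{q+1}$ and $h(a_i)^{q^2} = h(a_i)$. Raising the displayed sum to the $q$-th power therefore yields the equivalent condition
\[
\sum_{i=1}^{n} v_i^{q+1} f(a_i)^q\, h(a_i) = 0 \quad \text{for every } h \text{ with } \deg h \le k-1,
\]
which is precisely the statement that the vector $\bigl(v_1^{q+1} f^q(a_1), \ldots, v_n^{q+1} f^q(a_n)\bigr)$ is Euclidean-orthogonal to $GRS_k(\mathbf{a}, \mathbf{1})$.

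Finally I would invoke the standard description of the Euclidean dual of an un-weighted GRS code, namely $GRS_k(\mathbf{a}, \mathbf{1})^{\perp_E} = GRS_{n-k}(\mathbf{a}, \mathbf{u})$ where $u_i = \prod_{j\ne i}(a_i-a_j)^{-1}$. This identity follows from the partial fraction (or Lagrange interpolation) identity $\sum_i u_i P(a_i) = 0$ for every $P$ with $\deg P \le n-2$, together with a dimension count. Applying it here produces a polynomial $g(x) \in \mathbb{F}_{q^2}[x]$ with $\deg g \le n-k-1$ such that $v_i^{q+1} f(a_i)^q = u_i g(a_i)$ for all $i$, which is exactly the conclusion of the lemma. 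Reversing each step shows the converse.

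The reasoning is essentially the Euclidean proof wrapped in a Frobenius twist, so I do not expect any deep obstacle; the only subtle point is ensuring that one is permitted to raise the Hermitian sum to the $q$-th power while still capturing \emph{all} test polynomials $h$, which works here because $h \mapsto h^q$ is a bijection on polynomials of degree $\le k-1$ over $\mathbb{F}_{q^2}$ once one reinterprets coefficients, but more cleanly because the Frobenius is an automorphism of $\mathbb{F}_{q^2}$ and so the vanishing of a sum is equivalent to the vanishing of its $q$-th power. Book-keeping with the exponents $v_i^{q+1}$ and the identity $v_i^{q^2+q}=v_i^{q+1}$ is the only thing that really has to be checked carefully.
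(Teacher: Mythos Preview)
Your argument is correct: the Frobenius trick reduces Hermitian orthogonality to Euclidean orthogonality with the vector $(v_i^{q+1}f^q(a_i))_i$, and the standard identification $GRS_k(\mathbf{a},\mathbf{1})^{\perp_E}=GRS_{n-k}(\mathbf{a},\mathbf{u})$ finishes the job; the exponent bookkeeping $v_i^{(q+1)q}=v_i^{q+1}$ and $h(a_i)^{q^2}=h(a_i)$ is exactly right, and the equivalence of ``sum $=0$'' with ``sum$^q=0$'' via the automorphism is airtight. Note, however, that the paper does not actually prove this lemma---it simply quotes it from \cite[Lemma~6]{FF18}---so there is no in-paper proof to compare against; your write-up is the natural proof and presumably coincides with the one in the cited reference.
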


\begin{lemma}(\cite[Lemma 7]{FF18})\label{lem2.4}
A codeword $\textbf{c}=(v_{1}f(a_{1}), \ldots, v_{n}f(a_{n}), f_{k-1})$ of $GRS_{k}(\textbf{a}, \textbf{v}, \infty)$ is contained in $GRS_{k}(\textbf{a}, \textbf{v}, \infty)^{\perp_{H}}$ if and only if there exists a polynomial $g(x)\in \mathbb{F}_{q^{2}}[x]$ with $\deg(g(x)) \leq n-k$, such that
\begin{eqnarray*}
\begin{split}
   & (v_{1}^{q+1}f^{q}(a_{1}), v_{2}^{q+1}f^{q}(a_{2}), \ldots, v_{n}^{q+1}f^{q}(a_{n}), f^{q}_{k-1}) \\
    & =(u_{1}g(a_{1}), u_{2}g(a_{2}), \ldots, u_{n}g(a_{n}), -g_{n-k}).
\end{split}
\end{eqnarray*}
\end{lemma}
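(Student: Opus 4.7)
My plan is to adapt the argument behind Lemma \ref{lem2.2} to the Hermitian setting by means of a Frobenius manipulation. I would begin from the defining condition: $\textbf{c}\in GRS_{k}(\textbf{a},\textbf{v},\infty)^{\perp_{H}}$ iff $\langle \textbf{c},\textbf{c}'\rangle_{H}=0$ for every codeword $\textbf{c}'=(v_{1}h(a_{1}),\ldots,v_{n}h(a_{n}),h_{k-1})$ with $h\in\mathbb{F}_{q^{2}}[x]$, $\deg h\le k-1$. Since a scalar in $\mathbb{F}_{q^{2}}$ vanishes iff its $q$-th power vanishes, I would raise the inner product to the $q$-th power. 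Using that $v_{i}\in\mathbb{F}_{q^{2}}^{*}$ gives $v_{i}^{q^{2}+q}=v_{i}^{q+1}$ and that $h(a_{i})\in\mathbb{F}_{q^{2}}$ gives $h(a_{i})^{q^{2}}=h(a_{i})$, the condition simplifies to
\[
\sum_{i=1}^{n} v_{i}^{q+1}\, f^{q}(a_{i})\, h(a_{i}) + f_{k-1}^{q}\, h_{k-1}=0.
\]
Specializing to the monomial basis $h(x)=x^{j}$ for $0\le j\le k-1$ produces the $k$ scalar conditions $\sum_{i=1}^{n} v_{i}^{q+1} f^{q}(a_{i}) a_{i}^{j}=0$ for $0\le j\le k-2$ and $\sum_{i=1}^{n} v_{i}^{q+1} f^{q}(a_{i}) a_{i}^{k-1}=-f_{k-1}^{q}$.

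The bridge from these conditions to the claimed polynomial $g$ is the standard Lagrange identity $\sum_{i=1}^{n} u_{i}\, a_{i}^{t}=\delta_{t,n-1}$ for $0\le t\le n-1$ (with $u_{i}$ as in (\ref{1})), which yields $\sum_{i=1}^{n} u_{i}\, g(a_{i})\, a_{i}^{s} = g_{n-1-s}$ for every $g\in\mathbb{F}_{q^{2}}[x]$ of degree at most $n-1$. For the ``if'' direction, if $g$ satisfies $\deg g\le n-k$, $u_{i}g(a_{i})=v_{i}^{q+1}f^{q}(a_{i})$, and $g_{n-k}=-f_{k-1}^{q}$, the identity rewrites each of the sums above as $g_{n-1-j}$, which vanishes for $0\le j\le k-2$ since $n-1-j>\deg g$, and equals $g_{n-k}=-f_{k-1}^{q}$ for $j=k-1$, exactly matching the required conditions.

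For the ``only if'' direction, I would define $g$ as the unique polynomial of degree $\le n-1$ with $g(a_{i})=v_{i}^{q+1}f^{q}(a_{i})/u_{i}$ by Lagrange interpolation; the identity then gives $g_{n-1-j}=\sum_{i} v_{i}^{q+1}f^{q}(a_{i})\, a_{i}^{j}$, so the vanishing of the first $k-1$ sums forces $g_{n-1}=\cdots=g_{n-k+1}=0$ (hence $\deg g\le n-k$), while the last sum forces $g_{n-k}=-f_{k-1}^{q}$. The only delicate step I anticipate is the Frobenius bookkeeping at the start—one must verify carefully that raising $\langle \textbf{c},\textbf{c}'\rangle_{H}$ to the $q$-th power clears the $q$-exponents on $h(a_{i})$ and $h_{k-1}$ while preserving the $v_{i}^{q+1}$ coefficients. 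Once that is in place, the remainder is a direct parallel of the Euclidean extended-GRS proof in Lemma \ref{lem2.2}.
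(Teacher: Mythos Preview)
The paper does not prove Lemma~\ref{lem2.4}; it is quoted without proof from \cite[Lemma~7]{FF18}, so there is no in-paper argument to compare against. Your approach is the natural one and is essentially correct: the Frobenius step you flag as delicate works exactly as you describe (since all entries lie in $\mathbb{F}_{q^{2}}$, raising $\langle\textbf{c},\textbf{c}'\rangle_{H}$ to the $q$-th power sends $v_i^{q+1}f(a_i)h(a_i)^{q}$ to $v_i^{q+1}f^{q}(a_i)h(a_i)$ and $f_{k-1}h_{k-1}^{q}$ to $f_{k-1}^{q}h_{k-1}$), and from that point the computation is formally the Euclidean argument behind Lemma~\ref{lem2.2} applied to the vector $(v_i^{q+1}f^{q}(a_i))_i$.

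One small point deserves a more careful statement. The identity $\sum_i u_i\,g(a_i)\,a_i^{s}=g_{n-1-s}$ is only valid when $\deg(x^{s}g(x))\le n-1$, i.e.\ when $\deg g\le n-1-s$; it does not hold for arbitrary $g$ of degree $\le n-1$. In the ``only if'' direction you should therefore argue inductively on $j$: the case $j=0$ gives $g_{n-1}=0$, so $\deg g\le n-2$; then $j=1$ is legitimate and yields $g_{n-2}=0$; continuing through $j=k-2$ forces $\deg g\le n-k$, after which the $j=k-1$ equation legitimately reads $g_{n-k}=-f_{k-1}^{q}$. With that bookkeeping made explicit, the proof is complete.
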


Lemmas \ref{lem2.1}-\ref{lem2.4} will play important roles in calculating the dimension of the hull of the MDS codes constructed in Section 3.

\section{Constructions}
In this section, we will provide several families of GRS codes and extended GRS codes with Euclidean hulls or Hermitian hulls of arbitrary dimensions. The main idea of our constructions is to choose $n$ suitable distinct elements $a_{1}, a_{2}, \ldots, a_{n} \in \mathbb{F}_{q}$ (or $\mathbb{F}_{q^{2}}$) such that each value of $u_{i}$ defined by Eq. \eqref{1} can be easily calculated.

\subsection{MDS Codes with Euclidean Hulls of Arbitrary Dimensions}
In this subsection, we will provide some constructions of MDS codes with Euclidean hulls of arbitrary dimensions. Since $Hull_{E}(\mathcal{C})=Hull_{E}(\mathcal{C}^{\perp_{E}})$, we always assume that the dimension $k$ is less than or equal to half of the code length in our constructions.

The first construction is based on an additive subgroup of $\mathbb{F}_{q}$ and its cosets. Let $q=p^{m}$ and $r=p^{e}$, where $e \mid m$. Then $\mathbb{F}_{q}$ can be seen as a linear space over $\mathbb{F}_{r}$ of dimension $\frac{m}{e}$. Suppose $1 \leq t \leq r$ and $1 \leq z \leq \frac{m}{e}-1$, let $H$ be an $\mathbb{F}_{r}$-subspace of $\mathbb{F}_{q}$ (or $\mathbb{F}_{q^{2}}$ in the proof of Theorem \ref{thm3.6}) of dimension $z$. Choose $\eta \in \mathbb{F}_{q} \setminus H$ (or $\mathbb{F}_{q^{2}} \setminus H$). Label the elements of $\mathbb{F}_{r}$ as $\beta_{1}=0, \beta_{2},\ldots, \beta_{r}$. For $1 \leq j \leq t$, define
                     \[H_{j}:=H+\beta_{j}\eta:=\{h+\beta_{j}\eta \mid h \in H\}.\]
Let $n=tr^{z}$ and
\begin{equation}\label{2}
  \bigcup_{j=1}^{t}H_{j}:=\{a_{1}, a_{2}, \ldots, a_{n}\}.
\end{equation}

For $1 \leq i \leq n$, $u_{i}$ is defined as in \eqref{1}. Similar to \cite[Lemmas 8 and 9]{FF18}, the value of $u_{i}$ is given as follows.
\begin{lemma}\label{lem3.1}
For a given $1 \leq i \leq n$, suppose $a_{i} \in H_{b}$ for some $1 \leq b \leq t$. Then we have
\[u_{i}=(\prod_{h \in H, h \neq 0} h^{-1})(\prod_{g \in H}(\eta-g))^{1-t}\left(\prod_{1 \leq j \leq t, j\neq b}(\beta_{b}-\beta_{j})^{-1}\right). \]
In particular, let $\varepsilon=(\prod\limits_{h \in H, h \neq 0} h)(\prod\limits_{g \in H}(\eta-g))^{t-1}$, then
\[\varepsilon u_{i} \in \mathbb{F}_{r}^{*}.\]
\end{lemma}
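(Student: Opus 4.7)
The plan is to compute $\prod_{j\neq i}(a_i-a_j)$ directly by splitting the product according to which coset $H_c$ each $a_j$ lies in, and then invert. Write $a_i=h_i+\beta_b\eta$ with $h_i\in H$.

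First I would handle the contribution from the same coset. For $a_j\in H_b$ with $j\neq i$, we have $a_j=h_j+\beta_b\eta$ with $h_j\in H\setminus\{h_i\}$, so $a_i-a_j=h_i-h_j$, and as $h_j$ ranges over $H\setminus\{h_i\}$, $h_i-h_j$ ranges over $H\setminus\{0\}$. Hence this partial product equals $\prod_{h\in H,\,h\neq 0} h$.

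Next I would handle the contribution from each other coset $H_c$ with $c\neq b$. Writing $a_j=h_j+\beta_c\eta$, the substitution $g=h_i-h_j$ gives
\[
\prod_{a_j\in H_c}(a_i-a_j)=\prod_{g\in H}\bigl(g+(\beta_b-\beta_c)\eta\bigr).
\]
Replacing $g$ by $-g$ (which permutes $H$) turns this into $\prod_{g\in H}\bigl((\beta_b-\beta_c)\eta-g\bigr)$, and since $H$ is an $\mathbb{F}_r$-subspace the map $g\mapsto (\beta_b-\beta_c)g$ also permutes $H$, so factoring $\beta_b-\beta_c\in\mathbb{F}_r^*$ out yields $(\beta_b-\beta_c)^{r^z}\prod_{g\in H}(\eta-g)$. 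The key simplification is that $\beta_b-\beta_c\in\mathbb{F}_r$ with $r=p^e$, so Frobenius gives $(\beta_b-\beta_c)^{r^z}=\beta_b-\beta_c$; this is the small observation that makes the exponent collapse to $1$ and produces the clean formula in the statement.

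Multiplying the same-coset factor by the product over $c\neq b$ of these contributions, then inverting, immediately gives the claimed expression for $u_i$. For the second assertion, I would substitute the formula for $u_i$ into $\varepsilon u_i$ with $\varepsilon=\bigl(\prod_{h\in H,h\neq 0}h\bigr)\bigl(\prod_{g\in H}(\eta-g)\bigr)^{t-1}$ (which is nonzero because $\eta\notin H$), so that everything but the $\beta$-factor cancels and
\[
\varepsilon u_i=\prod_{1\le j\le t,\,j\neq b}(\beta_b-\beta_j)^{-1}.
\]
Since each $\beta_b-\beta_j$ is a nonzero element of $\mathbb{F}_r$, this product lies in $\mathbb{F}_r^*$, as claimed. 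The only substantive step is the Frobenius collapse $(\beta_b-\beta_c)^{r^z}=\beta_b-\beta_c$; everything else is bookkeeping with $\mathbb{F}_r$-linearity of $H$.
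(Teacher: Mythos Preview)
Your proof is correct and follows essentially the same approach as the paper's: split the product by cosets, handle the same-coset factor by a shift in $H$, reduce each other-coset factor using that $H$ is an $\mathbb{F}_r$-subspace, and then collapse the exponent via Frobenius. The paper compresses your Frobenius step $(\beta_b-\beta_c)^{r^z}=\beta_b-\beta_c$ into the single remark that ``the last equality holds since $\beta_b,\beta_j\in\mathbb{F}_r$,'' but the content is identical.
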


\begin{proof}
Suppose $a_{i}=\xi+\beta_{b}\eta$, for some $\xi \in H$. Then
\begin{eqnarray*}
% \nonumber to remove numbering (before each equation)
  u_{i} &=& \prod_{1 \leq j \leq n, j \neq i}(a_{i}-a_{j})^{-1} \\
   &=& \prod_{h_{b} \in H_{b}, h_{b} \neq a_{i}}(a_{i}-h_{b})^{-1}\prod_{1\leq j \leq t, j \neq b}\prod_{h_{j}\in H_{j}}(a_{i}-h_{j})^{-1}.
\end{eqnarray*}
Note that
\begin{eqnarray*}
% \nonumber to remove numbering (before each equation)
  \prod_{h_{b} \in H_{b}, h_{b} \neq a_{i}}(a_{i}-h_{b}) &=& \prod_{\gamma \in H, \gamma \neq \xi}(\xi+\beta_{b}\eta-(\gamma+\beta_{b}\eta)) \\
    &=& \prod_{h \in H, h \neq 0} h,
\end{eqnarray*}
and for $j \neq b$,
\begin{eqnarray*}
% \nonumber to remove numbering (before each equation)
  \prod_{h_{j}\in H_{j}}(a_{i}-h_{j}) &=& \prod_{\gamma \in H}(\xi+\beta_{b}\eta-(\gamma+\beta_{j}\eta)) \\
    &=& \prod_{g \in H}((\beta_{b}-\beta_{j})\eta-g) \\
    &=& (\beta_{b}-\beta_{j})\prod_{g \in H}(\eta-g).
\end{eqnarray*}
The last equality holds since $\beta_{b},\beta_{j} \in \mathbb{F}_{r}$. The lemma is proved.
\end{proof}
Before giving our constructions, we need the following simple lemma.

\begin{lemma}\label{lem3.2}
Let $F$ be a finite field and $A \subsetneqq F$. Then, for any integer $\ell \geq 0$ , there exists a monic polynomial $\pi(x) \in F[x]$ of degree $\ell$ such that $\pi(a) \neq 0 $ for all $a \in A$.
\end{lemma}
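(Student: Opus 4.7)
The plan is to give a direct constructive proof by exhibiting a single polynomial of the desired form. The key observation is that the hypothesis $A \subsetneqq F$ guarantees the existence of some element $b \in F \setminus A$, and any power of the linear factor $(x-b)$ vanishes only at $b$.

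First I would dispose of the degenerate case $\ell = 0$: the constant polynomial $\pi(x) = 1$ is the unique monic polynomial of degree $0$, and it evaluates to $1 \neq 0$ at every element of $F$, so in particular on $A$.

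For $\ell \geq 1$, I would pick any $b \in F \setminus A$ (such a $b$ exists precisely because $A$ is a proper subset of $F$) and define
\[
\pi(x) = (x-b)^{\ell}.
\]
By construction $\pi(x)$ is monic of degree $\ell$. For any $a \in A$, since $a \neq b$, we have $a - b \neq 0$, hence $\pi(a) = (a-b)^{\ell} \neq 0$ in the field $F$. This gives the required polynomial.

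There is essentially no obstacle here — the lemma is a one-line observation whose only content is the existence of an element outside $A$, which is immediate from $A \subsetneqq F$. The reason the authors isolate it as a lemma is presumably that it will be invoked repeatedly in the constructions that follow (to guarantee that auxiliary polynomials avoid vanishing at the evaluation points $a_1,\ldots,a_n$), so a clean standalone statement is convenient.
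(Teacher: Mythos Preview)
Your proof is correct and follows the same idea as the paper: both handle $\ell=0$ with $\pi(x)=1$ and then use an element $\delta \in F \setminus A$ for positive degree. The only difference is cosmetic: the paper treats $\ell=1$ explicitly and then cites an external reference (\cite[Lemma 12]{FF18}) for $\ell \geq 2$, whereas your uniform construction $\pi(x)=(x-b)^{\ell}$ covers all $\ell \geq 1$ at once and is fully self-contained.
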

\begin{proof}
For $\ell=0$, let $\pi(x)=1$; For $\ell=1$, let $\pi(x)=x-\delta$, where $\delta \in F\setminus A;$ For $\ell \geq 2$, the conclusion follows from \cite[Lemma 12]{FF18}.
\end{proof}

\begin{theorem}\label{thm3.3}
Let $q=p^{m}>2$ and $r=p^{e}$, where $e \mid m$. Suppose  $\frac{m}{e}$ is even. Let $n=tr^{z}$, where $1 \leq t \leq r$ and $1 \leq z \leq \frac{m}{e}-1$.
\begin{description}
  \item[(i)] For any $1 \leq k \leq \lfloor\frac{n}{2}\rfloor$ and $0 \leq \ell \leq k$, then there exists a $q$-ary $[n,k]$-MDS code $\mathcal{C}$ with $\dim(Hull_{E}(\mathcal{C}))=\ell$.
  \item[(ii)] If $n$ is even, then for any $1 \leq k \leq \frac{n}{2}$ and $0 \leq \ell \leq k-1$, there exists a $q$-ary $[n+1,k]$-MDS code $\mathcal{C}$ with $\dim(Hull_{E}(\mathcal{C}))=\ell$.
  \item[(iii)] If $n$ is odd and $n <q$, then for any $1 \leq k \leq \frac{n+1}{2}$ and $0 \leq \ell \leq k$, there exists a $q$-ary $[n+1,k]$-MDS code $\mathcal{C}$ with $\dim(Hull_{E}(\mathcal{C}))=\ell$.
\end{description}
\end{theorem}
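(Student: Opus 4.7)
The plan is to realize each of the three codes as $GRS_{k}(\mathbf{a},\mathbf{v})$ (for (i)) or $GRS_{k}(\mathbf{a},\mathbf{v},\infty)$ (for (ii),(iii)) with evaluation points $\mathbf{a}$ from Eq.~\eqref{2}, and to choose the column multipliers $\mathbf{v}$ so that Lemma~\ref{lem2.1} (resp.\ Lemma~\ref{lem2.2}) pins the hull dimension at $\ell$. Two structural inputs are central. First, Lemma~\ref{lem3.1} forces every $u_{i}$ into the single coset $\varepsilon^{-1}\mathbb{F}_{r}^{*}$ of $\mathbb{F}_{q}^{*}$. Second, the hypothesis $m/e$ even guarantees $\mathbb{F}_{r}^{*}\subseteq(\mathbb{F}_{q}^{*})^{2}$: in characteristic $2$ trivially, and in odd characteristic because $\mathbb{F}_{r^{2}}\subseteq\mathbb{F}_{q}$ and $\alpha^{(r^{2}-1)/2}=(\alpha^{r-1})^{(r+1)/2}=1$ for every $\alpha\in\mathbb{F}_{r}^{*}$. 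Together, all $u_{i}$ lie in the common coset $\varepsilon^{-1}(\mathbb{F}_{q}^{*})^{2}$.

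For part (i), given $0\le\ell\le k$, set $d:=n-k-\ell$. The goal is to exhibit a polynomial $\pi\in\mathbb{F}_{q}[x]$ of degree exactly $d$ with $\pi(a_{i})\ne 0$ and $u_{i}\pi(a_{i})\in(\mathbb{F}_{q}^{*})^{2}$ for every $i$. Picking $v_{i}\in\mathbb{F}_{q}^{*}$ with $v_{i}^{2}=u_{i}\pi(a_{i})$ and setting $\mathcal{C}=GRS_{k}(\mathbf{a},\mathbf{v})$, Lemma~\ref{lem2.1} identifies $Hull_{E}(\mathcal{C})$ with those $f\in\mathbb{F}_{q}[x]_{<k}$ for which $\pi(a_{i})f(a_{i})=g(a_{i})$ for some $g$ of degree $\le n-k-1$; since $\deg(\pi f)\le d+k-1=n-\ell-1<n$, unique polynomial interpolation forces $g=\pi f$, and the degree constraint on $g$ becomes $\deg f\le\ell-1$. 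Hence $\dim Hull_{E}(\mathcal{C})=\ell$ exactly. To build $\pi$, I exploit the $\mathbb{F}_{r}$-linearity of $L_{H}(x)=\prod_{h\in H}(x-h)$: for any $\beta\in\mathbb{F}_{r}\setminus\{\beta_{1},\ldots,\beta_{t}\}$, $L_{H}(a_{i}-\beta\eta)=(\beta_{b_{i}}-\beta)L_{H}(\eta)\in\mathbb{F}_{r}^{*}\cdot L_{H}(\eta)$, whose quadratic character in $\mathbb{F}_{q}^{*}$ is independent of $i$. Combining several factors $L_{H}(x-\beta_{t+j}\eta)$ with a squared polynomial $\pi_{0}(x)^{2}$ (with $\pi_{0}$ supplied by Lemma~\ref{lem3.2}) and an appropriate scalar $c\in\mathbb{F}_{q}^{*}$ then yields $\pi$ of the right degree with $\chi(\pi(a_{i}))=\chi(\varepsilon)$ uniformly in $i$.

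Parts (ii) and (iii) follow the same template applied to $GRS_{k}(\mathbf{a},\mathbf{v},\infty)$, invoking Lemma~\ref{lem2.2} in place of Lemma~\ref{lem2.1}. The supplementary constraint $f_{k-1}=-g_{n-k}$ coming from the $\infty$-coordinate accounts for the discrepancy between the two parity regimes: when $n$ is even it couples the leading coefficients of $f$ and $\pi f$ in an inconsistent manner at $\ell=k$, forcing $\ell\le k-1$; when $n$ is odd and $n<q$, a careful tuning of the leading coefficient of $\pi$ absorbs the constraint, permitting $\ell=k$. The main obstacle throughout is engineering a $\pi$ of exact degree $d$ whose values at all $a_{i}$ lie in the correct coset of squares; my plan handles this structurally via $L_{H}$ whenever $t\le r-1$ and $d\ge r^{z}$ in the odd-$d$ case, with additional care needed in the extremal parameter regimes (small odd $d<r^{z}$ in odd characteristic, or $t=r$ with $n=q$), where either passage to a larger ambient $\mathbb{F}_{r}$-subspace or a dedicated scalar/parity adjustment is required.
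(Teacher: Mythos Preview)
Your approach---encoding the hull dimension through a degree constraint by twisting the multipliers with a polynomial $\pi$ of degree exactly $d=n-k-\ell$---is genuinely different from the paper's, and the basic mechanism (once such a $\pi$ exists) is correct. However, the construction of $\pi$ is not just a matter of ``additional care'' in extremal regimes: the required $\pi$ can fail to exist outright.

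Take $p=r=3$, $m=2$, $q=9$, $z=1$, $t=1$, so $n=3$; set $k=1$, $\ell=1$, hence $d=1$. Here $H=\mathbb{F}_{3}$, $\{a_{1},a_{2},a_{3}\}=\{0,1,2\}$, and one computes $u_{i}=2$ for all $i$, which is a square in $\mathbb{F}_{9}$. Your scheme therefore needs a degree-one $\pi\in\mathbb{F}_{9}[x]$ with $\pi(0),\pi(1),\pi(2)$ all nonzero squares in $\mathbb{F}_{9}$; equivalently, an $\mathbb{F}_{3}$-affine line in $\mathbb{F}_{9}$ lying entirely inside $(\mathbb{F}_{9}^{*})^{2}$. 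A direct check of all twelve affine $\mathbb{F}_{3}$-lines in $\mathbb{F}_{9}$ shows none works: every line not through $0$ meets both square classes, and every line through $0$ contains $0$. Your suggested fixes do not help here either: $L_{H'}$ for any nontrivial $\mathbb{F}_{3}$-subspace $H'$ already has degree $\ge 3>d$, and no scalar adjustment changes the distribution of quadratic characters along a line.

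The paper sidesteps this entirely with a much simpler device. It fixes $v_{i}^{2}=\varepsilon u_{i}$ once and for all (possible since $\varepsilon u_{i}\in\mathbb{F}_{r}^{*}\subseteq(\mathbb{F}_{q}^{*})^{2}$ when $m/e$ is even), and then \emph{spoils} exactly $s=k-\ell$ of the coordinates by replacing $v_{i}$ with $\alpha v_{i}$ for $i\le s$, where $\alpha^{2}\ne 1$. From the unspoiled coordinates Lemma~\ref{lem2.1} forces $g=\varepsilon f$, and then the spoiled coordinates give $\alpha^{2}\varepsilon f(a_{i})=\varepsilon f(a_{i})$, hence $f(a_{i})=0$ for $i\le s$. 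Thus $Hull_{E}(\mathcal{C})$ consists precisely of the $f$ divisible by $\prod_{i\le s}(x-a_{i})$, of dimension $k-s=\ell$, with no case analysis on $d$ whatsoever. Parts (ii) and (iii) are handled by the same spoiling trick applied to $GRS_{k}(\mathbf{a},\mathbf{v},\infty)$ via Lemma~\ref{lem2.2}; the only role of an auxiliary polynomial (the monic $\pi$ of degree $(n+1-2k)/2$ in (iii)) is to reconcile $f_{k-1}$ with $-g_{n-k}$ at the $\infty$-coordinate, not to control square classes.
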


\begin{proof}
Let $a_{1}, a_{2}, \ldots, a_{n}$ be defined as \eqref{2} and $\varepsilon$ be defined as in Lemma \ref{lem3.1}. Choose $\alpha \in \mathbb{F}_{q}^{*}$ with $\alpha^{2} \neq 1$.

\textbf{(i)} Since $\frac{m}{e}$ is even, each element of $\mathbb{F}_{r}$ is a square in $\mathbb{F}_{q}$. By Lemma \ref{lem3.1}, there exist $v_{1}, \ldots,  v_{n} \in \mathbb{F}_{q}^{*}$ such that
        \[\varepsilon u_{i}= v_{i}^{2},\]
    for $1 \leq i \leq n$.  Denote $s:=k-\ell$. Put $\textbf{a}=(a_{1}, a_{2}, \ldots, a_{n})$ and $\textbf{v}=(\alpha v_{1},\ldots, \alpha v_{s}, v_{s+1},\ldots, v_{n})$. We consider the Euclidean hull of the $[n,k]_{q}$-MDS code $\mathcal{C}:=GRS_{k}(\textbf{a}, \textbf{v})$. For any $\textbf{c}=(\alpha v_{1}f(a_{1}),\ldots, \alpha v_{s}f(a_{s}),v_{s+1}f(a_{s+1}),\ldots,v_{n}f(a_{n})) \in Hull_{E}(\mathcal{C})$ with $\deg(f(x)) \leq k-1$. By Lemma \ref{lem2.1}, there exists a polynomial $g(x) \in \mathbb{F}_{q}[x]$ with $\deg(g(x)) \leq n-k-1$  such that
\begin{eqnarray*}
% \nonumber to remove numbering (before each equation)
   & & (\alpha^{2} v_{1}^{2}f(a_{1}),\ldots, \alpha^{2} v_{s}^{2}f(a_{s}),v_{s+1}^{2}f(a_{s+1}),\ldots,v_{n}^{2}f(a_{n})) \\
    &&=(u_{1}g(a_{1}),\ldots, u_{s}g(a_{s}),u_{s+1}g(a_{s+1}),\ldots,u_{n}g(a_{n})).
\end{eqnarray*}
Since $\varepsilon u_{i}=v_{i}^{2}$, we have
\begin{equation}\label{3}
  \begin{split}
     & (\alpha^{2} \varepsilon u_{1}f(a_{1}),\ldots, \alpha^{2} \varepsilon u_{s}f(a_{s}), \varepsilon u_{s+1}f(a_{s+1}),\ldots,\varepsilon u_{n}f(a_{n})) \\
      & =(u_{1}g(a_{1}),\ldots, u_{s}g(a_{s}), u_{s+1}g(a_{s+1}),\ldots, u_{n}g(a_{n})).
  \end{split}
\end{equation}
From the last $n-s$ coordinates of Eq. \eqref{3}, we obtain that $\varepsilon f(a_{i})=g(a_{i})$ for any $s < i \leq n$. Since $k \leq \lfloor\frac{n}{2}\rfloor$, $\deg(f(x)) \leq k-1 \leq n-k-1$. Note that $\deg(g(x)) \leq n-k-1$ and $n-s \geq n-k$, thus $\varepsilon f(x)= g(x)$. On the other hand, the first $s$ coordinates of Eq. \eqref{3} imply that
\[\alpha^{2} \varepsilon u_{i}f(a_{i})= u_{i}g(a_{i})=\varepsilon u_{i}f(a_{i}),\]
for any $1 \leq i \leq s$. It follows from $\alpha^{2} \neq 1$ and $\varepsilon u_{i} \neq 0$ that $f(a_{i})=0$. Thus
\[f(x)=h(x)\prod_{i=1}^{s}(x-a_{i}),\]
for some $h(x) \in \mathbb{F}_{q}[x]$ with $\deg(h(x)) \leq k-1-s $.
It deduces that $\dim(Hull_{E}(C)) \leq k-s$.

Conversely, let $f(x)$ be a polynomial of form $h(x)\prod_{i=1}^{s}(x-a_{i})$, where $h(x) \in \mathbb{F}_{q}[x]$ and $\deg(h(x)) \leq k-1-s $. We take $g(x)=\varepsilon f(x)$, then $\deg(g(x)) \leq n-k-1$ and
\begin{eqnarray*}
\begin{split}
   & (\alpha^{2} v_{1}^{2}f(a_{1}),\ldots, \alpha^{2} v_{s}^{2}f(a_{s}),v_{s+1}^{2}f(a_{s+1}),\ldots,v_{n}^{2}f(a_{n}))\\
    & =(u_{1}g(a_{1}),\ldots, u_{s}g(a_{s}),u_{s+1}g(a_{s+1}),\ldots,u_{n}g(a_{n})).
\end{split}
\end{eqnarray*}
By Lemma \ref{lem2.1}, the vector
$(\alpha v_{1}f(a_{1}),\ldots,\alpha v_{s}f(a_{s}),$ $v_{s+1}f(a_{s+1}),\ldots,v_{n}f(a_{n})) \in Hull_{E}(\mathcal{C}).$
Therefore $\dim(Hull_{E}(\mathcal{C})) \geq k-s$, hence $\dim(Hull_{E}(\mathcal{C})) = k-s=\ell$.

\textbf{(ii)} Denote $s=k-1-\ell$. Let $\textbf{v}$ be defined as in the proof of Part (i). We consider the Euclidean hull of the $[n+1,k]_{q}$-MDS code $\mathcal{C}:=GRS_{k}(\textbf{a}, \textbf{v}, \infty)$. For any $\textbf{c}=(\alpha v_{1}f(a_{1}),\ldots, \alpha v_{s}f(a_{s}),$
$v_{s+1}f(a_{s+1}),\ldots,v_{n}f(a_{n}), f_{k-1}) \in Hull_{E}(\mathcal{C})$ with $\deg(f(x)) \leq k-1$. By Lemma \ref{lem2.2}, there exists a polynomial $g(x) \in \mathbb{F}_{q}[x]$ with $\deg(g(x)) \leq n-k$  such that
    \begin{equation}\label{4}
  \begin{split}
     & \Big(\varepsilon\alpha^{2} u_{1}f(a_{1}),\ldots, \varepsilon\alpha^{2} u_{s}f(a_{s}), \varepsilon u_{s+1}f(a_{s+1}), \ldots, \varepsilon u_{n}f(a_{n}), f_{k-1}\Big)\\
       & =\Big(u_{1}g(a_{1}),\ldots, u_{s}g(a_{s}), u_{s+1}g(a_{s+1}),\ldots, u_{n}g(a_{n}), -g_{n-k}\Big).
  \end{split}
\end{equation}
From Eq. \eqref{4}, we can similarly deduce that $ f_{k-1}= -g_{n-k}$ and $\varepsilon f(x)= g(x)$. If $f_{k-1} \neq 0$, then $k-1=n-k$, i.e., $n=2k-1$ which contradicts to the assumption that $n$ is even. Thus $f_{k-1}=0$ and $\deg(f(x)) \leq k-2$. On the other hand, the first $s$ coordinates of Eq. \eqref{4} imply that
\[\alpha^{2} \varepsilon u_{i}f(a_{i})= u_{i}g(a_{i})=\varepsilon u_{i}f(a_{i}),\]
for any $1 \leq i \leq s$. It follows from $\alpha^{2} \neq 1$ and $\varepsilon u_{i} \neq 0$ that $f(a_{i})=0$. Thus
\[f(x)=h(x)\prod_{i=1}^{s}(x-a_{i}),\]
for some $h(x) \in \mathbb{F}_{q}[x]$ with $\deg(h(x)) \leq k-2-s $.
It deduces that $\dim(Hull_{E}(C)) \leq k-1-s$.

Conversely, let $f(x)$ be a polynomial of form $h(x)\prod_{i=1}^{s}(x-a_{i})$, where $h(x) \in \mathbb{F}_{q}[x]$ and $\deg(h(x)) \leq k-2-s $. We take $g(x)=\varepsilon f(x)$, then $\deg(g(x)) \leq n-k-1$ and
\begin{eqnarray*}
\begin{split}
   & (\alpha^{2} v_{1}^{2}f(a_{1}),\ldots, \alpha^{2} v_{s}^{2}f(a_{s}),v_{s+1}^{2}f(a_{s+1}),\ldots,v_{n}^{2}f(a_{n}),0) \\
    & =(u_{1}g(a_{1}),\ldots, u_{s}g(a_{s}),u_{s+1}g(a_{s+1}),\ldots,u_{n}g(a_{n}),0).
\end{split}
\end{eqnarray*}
By Lemma \ref{lem2.2}, the vector $(\alpha v_{1}f(a_{1}),\ldots, \alpha v_{s}f(a_{s}),$ $v_{s+1}f(a_{s+1}),\ldots,v_{n}f(a_{n}), 0) \in Hull_{E}(\mathcal{C}).$
Therefore $\dim(Hull_{E}(\mathcal{C})) \geq k-1-s$, hence $\dim(Hull_{E}(\mathcal{C})) = k-1-s=\ell$.

\textbf{(iii)} We first claim that $\varepsilon$ is a square in $\mathbb{F}_{q}$. Indeed, if $q$ is even, it is done since each element in $\mathbb{F}_{q}$ is a square. Suppose $q$ is odd, if $h \neq 0 \in H$, then $-h \in H$ and $-h \neq h$. Thus $\prod_{h \in H, h \neq 0} h=(-1)^{\frac{|H|-1}{2}}\tau^{2}$ is a square in $\mathbb{F}_{q}$, where $\tau \in \mathbb{F}_{q}$. Note that $n$ is odd, thus $t$ is odd and hence $(\prod_{g \in H}(\eta-g))^{t-1}$ is a square. Thus the claim holds.   By Lemma \ref{lem3.1} and the fact that each element of $\mathbb{F}_{r}$ is a square in $\mathbb{F}_{q}$, there exist $v_{1}, \ldots,  v_{n} \in \mathbb{F}_{q}^{*}$ such that
        \[u_{i}= -v_{i}^{2},\textnormal{ for all }1 \leq i \leq n.\]
By Lemma \ref{lem3.2}, there exists a monic polynomial $\pi(x) \in \mathbb{F}_{q}[x]$ with $\deg(\pi(x))=\frac{n+1-2k}{2}$ such that
 \[\pi(a_{i}) \neq 0,\textnormal{ for all }1 \leq i \leq n.\]
 Denote $s= k-\ell$. Put $\textbf{a}=(a_{1}, a_{2}, \ldots, a_{n})$ and $\textbf{v}=(\alpha v_{1}\pi_{1},\ldots, \alpha v_{s}\pi_{s}, v_{s+1}\pi_{s+1},\ldots, v_{n}\pi_{n})$, where $\pi_{i}=\pi(a_{i})$. We consider the Euclidean hull of the $[n+1,k]_{q}$-MDS code $\mathcal{C}:=GRS_{k}(\textbf{a}, \textbf{v}, \infty)$.
For any vector $\textbf{c}=(\alpha v_{1}\pi_{1}f(a_{1}),\ldots, \alpha v_{s}\pi_{s}f(a_{s}), v_{s+1}\pi_{s+1}f(a_{s+1}),
\ldots,$ $v_{n}\pi_{n}f(a_{n}),f_{k-1}) \in Hull_{E}(\mathcal{C})$ with $\deg(f(x)) \leq k-1$. By Lemma \ref{lem2.2} and $u_{i}= -v_{i}^{2}$, there exists a polynomial $g(x) \in \mathbb{F}_{q}[x]$ with $\deg(g(x)) \leq n-k$  such that
    \begin{equation}\label{5}
  \begin{split}
     & \Big(\alpha^{2}u_{1}\pi_{1}^{2}f(a_{1}),\ldots, \alpha^{2} u_{s}\pi_{s}^{2}f(a_{s}), u_{s+1}\pi_{s+1}^{2}f(a_{s+1}), \ldots,  u_{n}\pi_{n}^{2}f(a_{n}), -f_{k-1}\Big)\\
      &=-\Big(u_{1}g(a_{1}),\ldots, u_{s}g(a_{s}), u_{s+1}g(a_{s+1}), \ldots, u_{n}g(a_{n}), -g_{n-k}\Big).
  \end{split}
\end{equation}

From the $(s+1)$-th to $n$-th coordinates of Eq. (5), we obtain that $\pi_{i}^{2}f(a_{i})=-g(a_{i})$ for any $s < i \leq n$. Since $k \leq \lfloor\frac{n}{2}\rfloor$, $\deg(\pi^{2}(x)f(x)) \leq (n+1-2k)+k-1 \leq n-k$. Note that $\deg(g(x)) \leq n-k$ and $f_{k-1}=-g_{n-k}$, thus $\deg(\pi^{2}(x)f(x)+g(x)) \leq n-k-1$. Note that $n-s \geq n-k$, hence $\pi^{2}(x)f(x)=-g(x)$. On the other hand, the first $s$ coordinates of Eq. \eqref{5} imply that
\[\alpha^{2} u_{i}\pi^{2}(a_{i})f(a_{i})= -u_{i}g(a_{i})=u_{i}\pi^{2}(a_{i})f(a_{i}),\]
for any $1 \leq i \leq s$. It follows from $\alpha^{2} \neq 1$ and $ u_{i}\pi(a_{i}) \neq 0$ that $f(a_{i})=0$. Thus
\[f(x)=h(x)\prod_{i=1}^{s}(x-a_{i}),\]
for some $h(x) \in \mathbb{F}_{q}[x]$ of $\deg(h(x)) \leq k-1-s $.
It deduces that $\dim(Hull_{E}(\mathcal{C})) \leq k-s$.

Conversely, let $f(x)$ be a polynomial of form $h(x)\prod_{i=1}^{s}(x-a_{i})$, where $h(x) \in \mathbb{F}_{q}[x]$ and $\deg(h(x)) \leq k-1-s $. We set $g(x)=-\pi^{2}(x)f(x)$. Then $\deg(g(x))\leq n-k$, and $\deg(f(x)) = k-1$ if and only if $\deg(g(x))=n-k$. Thus
$f_{k-1}=-g_{n-k}$. It is directly to verify that
\begin{eqnarray*}
\begin{split}
   & \Big(\alpha^{2} v_{1}^{2}\pi^{2}_{1}f(a_{1}),\ldots, \alpha^{2} v_{s}^{2}\pi^{2}_{s}f(a_{s}),v_{s+1}^{2}\pi^{2}_{s+1}f(a_{s+1}), \ldots,v_{n}^{2}\pi^{2}_{n}f(a_{n}), f_{k-1}\Big) \\
    &=\Big(u_{1}g(a_{1}),\ldots, u_{s}g(a_{s}), u_{s+1}g(a_{s+1}),\ldots,u_{n}g(a_{n}), -g_{n-k}\Big).
\end{split}
\end{eqnarray*}
By Lemma \ref{lem2.2}, the vector
\[(\alpha v_{1}\pi_{1}f(a_{1}),\ldots, \alpha v_{s}\pi_{s}f(a_{s}),v_{s+1}\pi_{s+1}f(a_{s+1}),\ldots,v_{n}\pi_{n}f(a_{n}), f_{k-1}) \in Hull_{E}(\mathcal{C}).\]
Therefore $\dim(Hull_{E}(\mathcal{C})) \geq k-s$, hence $\dim(Hull_{E}(\mathcal{C})) = k-s=\ell$.

The proof is completed.

\end{proof}

In the following theorem, we employ a multiplicative subgroup of $\mathbb{F}_{q}^{*}$ and the zero element to construct extended GRS codes with Euclidean hulls of arbitrary dimensions.

\begin{theorem}\label{thm3.4}
Let $q$ be a prime power. Assume that $n$ is odd, $n <q$ and $(n-1) \mid (q-1)$. If $1-n$ is a square in $\mathbb{F}_{q}$, then for any $1 \leq k \leq \frac{n+1}{2}$ and $0 \leq \ell \leq k$, there exists a $q$-ary $[n+1,k]$-MDS code $\mathcal{C}$ with $\dim(Hull_{E}(\mathcal{C}))=\ell$.
\end{theorem}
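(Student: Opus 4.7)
The plan is to mirror the proof of Theorem \ref{thm3.3}(iii), but replace the additive coset union by a multiplicative subgroup of $\mathbb{F}_q^*$ together with the zero element. Since $(n-1)\mid(q-1)$, let $G$ be the unique subgroup of $\mathbb{F}_q^*$ of order $n-1$, enumerate $G=\{a_1,\ldots,a_{n-1}\}$, and set $a_n=0$. Because $n<q$, this gives $n$ distinct elements of $\mathbb{F}_q$, so $GRS_k(\textbf{a},\textbf{v},\infty)$ is a genuine $[n+1,k]_q$ MDS code for any $\textbf{v}\in(\mathbb{F}_q^*)^n$.

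The first key step is to evaluate the products $u_i$ defined in \eqref{1}. For $1\le i\le n-1$, the factorisation $x^{n-1}-1=\prod_{g\in G}(x-g)$ together with $a_i^{n-1}=1$ gives $\prod_{g\in G,\,g\ne a_i}(a_i-g)=(n-1)a_i^{n-2}$, and multiplying by the extra factor $(a_i-0)=a_i$ yields $u_i=(n-1)^{-1}$. For $i=n$ one has $\prod_{j<n}(0-a_j)=(-1)^{n-1}\prod_{g\in G}g$; comparing constant terms in $x^{n-1}-1=\prod_g(x-g)$ shows $\prod_g g=(-1)^n$, and since $n$ is odd this collapses to $u_n=-1$. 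Hence $-u_i=(1-n)^{-1}$ for $i<n$ and $-u_n=1$. Under the hypothesis that $1-n$ is a square in $\mathbb{F}_q$, each $-u_i$ is therefore a square, and we may pick $v_1,\ldots,v_n\in\mathbb{F}_q^*$ with $v_i^2=-u_i$ for every $i$.

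From this point on the proof replays Theorem \ref{thm3.3}(iii) essentially unchanged. Apply Lemma \ref{lem3.2} to produce a monic $\pi(x)\in\mathbb{F}_q[x]$ of degree $\tfrac{n+1-2k}{2}$ (an integer since $n$ is odd) with $\pi(a_i)\ne 0$ for all $i$, pick $\alpha\in\mathbb{F}_q^*$ with $\alpha^2\ne 1$, put $s=k-\ell$, and set
\[
\textbf{v}=(\alpha v_1\pi(a_1),\ldots,\alpha v_s\pi(a_s),v_{s+1}\pi(a_{s+1}),\ldots,v_n\pi(a_n)).
\]
Let $\mathcal{C}=GRS_k(\textbf{a},\textbf{v},\infty)$ and feed an arbitrary element of $Hull_E(\mathcal{C})$ into Lemma \ref{lem2.2}. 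The identity $u_i=-v_i^2$ turns the resulting system into precisely the form of Eq.~(5) in the proof of Theorem \ref{thm3.3}(iii); a degree count in the last $n-s$ coordinates forces $\pi^2(x)f(x)=-g(x)$, and $\alpha^2\ne 1$ applied to the first $s$ coordinates forces $(x-a_1)\cdots(x-a_s)\mid f(x)$, whence $\dim Hull_E(\mathcal{C})\le k-s=\ell$. The reverse inequality comes from choosing $g=-\pi^2 f$ for any $f$ of the form $h(x)\prod_{i=1}^s(x-a_i)$ with $\deg h\le k-1-s$, which is checked to lie in the hull via Lemma \ref{lem2.2}.

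I expect the only real obstruction to be the $u_i$ calculation: one has to treat $a_n=0$ separately from the roots-of-unity points, and correctly handle the sign bookkeeping that uses $n$ odd to pin down $\prod_{g\in G}g=-1$. Once this reduces $-u_i$ to a square at every position, the hypothesis that $1-n$ is a square plays exactly the role that ``$\prod_{h\neq0}h\cdot(\prod(\eta-g))^{t-1}$ is a square'' played in Theorem \ref{thm3.3}(iii), and the remainder of the proof is a mechanical adaptation of that argument.
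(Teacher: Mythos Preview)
Your proposal is correct and matches the paper's proof essentially line for line: the paper also takes the $(n-1)$-th roots of unity together with $0$, computes $u_i=(n-1)^{-1}$ for $i<n$ and $u_n=-1$, uses the square hypothesis on $1-n$ to write $u_i=-v_i^2$, inserts the auxiliary monic polynomial $\pi$ of degree $\tfrac{n+1-2k}{2}$ from Lemma~\ref{lem3.2}, and then explicitly defers the remainder of the argument to Part~(iii) of Theorem~\ref{thm3.3}. The only cosmetic difference is that the paper uses a single $v$ with $v^2=(1-n)^{-1}$ for the first $n-1$ coordinates and $1$ at the last, whereas you allow distinct $v_i$; since the $u_i$ for $i<n$ are all equal this changes nothing.
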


\begin{proof}
  Let $v \in \mathbb{F}_{q}^{*}$ such that $(1-n)^{-1}=v^{2}$. Let $\theta \in \mathbb{F}_{q}$ be a primitive $(n-1)$-th root of unity. For $1 \leq i \leq n-1$, denote $a_{i}=\theta^{i}$ and $a_{n}=0$. It is not hard to calculate that
  \[u_{i}=(n-1)^{-1}=-v^{2},\text{ for }1 \leq i \leq n-1,\]
  and
  \[u_{n}=-1.\]
  By Lemma \ref{lem3.2}, there exists a monic polynomial $\pi(x) \in \mathbb{F}_{q}[x]$ of $\deg(\pi(x))=\frac{n+1-2k}{2}$ such that
 \[\pi(a_{i}) \neq 0,\]
 for all $1 \leq i \leq n$. Choose $\alpha \in \mathbb{F}_{q}^{*}$ with $\alpha^{2} \neq 1$.
    Denote $s:=k-\ell$. Put $\textbf{a}=(a_{1}, a_{2}, \ldots, a_{n})$ and $\textbf{v}=(\alpha v\pi(a_{1}),\ldots, \alpha v\pi(a_{s}), v\pi(a_{s+1}),\ldots, v\pi(a_{n-1}), \pi(a_{n}))$. We consider the Euclidean hull of the $[n+1,k]_{q}$-MDS code $\mathcal{C}:=GRS_{k}(\textbf{a}, \textbf{v},\infty)$. The rest of the proof is completely similar to the Part (iii) of Theorem \ref{thm3.3}.
\end{proof}

\subsection{MDS Codes with Hermitian Hulls of Arbitrary Dimensions}
In this subsection, we will provide some constructions of $q^{2}$-ary MDS codes with Hermitian hulls of arbitrary dimensions.

The first construction consider the $q^{2}$-ary MDS codes of length $n \leq q$.
\begin{theorem}\label{thm3.5}
Let $q>2$ be a prime power. Assume that $n \leq q$. Then for any $1 \leq k \leq \lfloor\frac{n}{2}\rfloor$ and $0 \leq \ell \leq k$, there exists a $q^{2}$-ary $[n,k]$-MDS code $\mathcal{C}$ with $\dim(Hull_{H}(\mathcal{C}))=\ell$.
\end{theorem}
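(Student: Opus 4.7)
The plan is to mimic the proof of Theorem~\ref{thm3.3}(i), replacing the Euclidean Lemma~\ref{lem2.1} by its Hermitian counterpart Lemma~\ref{lem2.3}, and exploiting the fact that $n\le q$ lets us place all evaluation points inside the subfield $\mathbb{F}_q\subset\mathbb{F}_{q^2}$. Concretely, I will choose $n$ distinct elements $a_1,\dots,a_n\in\mathbb{F}_q$, so that the quantities $u_i=\prod_{j\ne i}(a_i-a_j)^{-1}$ automatically lie in $\mathbb{F}_q^{*}$. Then, because the norm map $N:\mathbb{F}_{q^2}^{*}\to\mathbb{F}_q^{*},\;x\mapsto x^{q+1}$ is surjective, I can pick $v_i\in\mathbb{F}_{q^2}^{*}$ with $v_i^{q+1}=u_i$ for each $i$. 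I will also fix $\alpha\in\mathbb{F}_{q^2}^{*}$ with $\alpha^{q+1}\ne1$, which exists because the norm-one subgroup has order $q+1$ and $|\mathbb{F}_{q^2}^{*}|=(q-1)(q+1)>q+1$ when $q>2$.

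Write $s=k-\ell$ and set $\mathbf{v}=(\alpha v_1,\dots,\alpha v_s,v_{s+1},\dots,v_n)$, $\mathbf{a}=(a_1,\dots,a_n)$, and $\mathcal{C}=GRS_k(\mathbf{a},\mathbf{v})$. By Lemma~\ref{lem2.3}, a codeword $(v_if(a_i))_i$ (with the extra $\alpha$-factors on the first $s$ coordinates) lies in $Hull_H(\mathcal{C})$ iff there exists $g(x)\in\mathbb{F}_{q^2}[x]$ with $\deg g\le n-k-1$ such that
\[
\alpha^{q+1}u_i f(a_i)^{q}=u_ig(a_i)\ (1\le i\le s),\qquad u_i f(a_i)^{q}=u_ig(a_i)\ (s<i\le n).
\]
The essential trick is that since $a_i\in\mathbb{F}_q$, we have $a_i^{q}=a_i$, so $f(a_i)^{q}=\tilde f(a_i)$ where $\tilde f(x)=\sum f_j^q x^j$ is the polynomial obtained by conjugating the coefficients of $f$; in particular $\deg\tilde f\le k-1$.

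After cancelling $u_i$, the last $n-s$ equations force $g(x)-\tilde f(x)$ to vanish at $a_{s+1},\dots,a_n$. Because $k\le\lfloor n/2\rfloor$ implies $\deg(g-\tilde f)\le n-k-1<n-k\le n-s$, this forces $g=\tilde f$. Then the first $s$ equations give $(\alpha^{q+1}-1)\tilde f(a_i)=0$, and since $\alpha^{q+1}\ne1$ we conclude $f(a_i)=0$ for $1\le i\le s$, hence $f(x)=h(x)\prod_{i=1}^{s}(x-a_i)$ with $\deg h\le k-1-s$. This yields $\dim Hull_H(\mathcal{C})\le k-s=\ell$. Conversely, every $f$ of this shape produces a valid $g:=\tilde f$ (of degree $\le k-1\le n-k-1$), which gives the matching lower bound $\dim Hull_H(\mathcal{C})\ge\ell$ via Lemma~\ref{lem2.3}.

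The only step requiring care is the reduction $f(a_i)^q=\tilde f(a_i)$, which is why I insist $a_i\in\mathbb{F}_q$; without this, the nonlinearity of $x\mapsto x^q$ would prevent the degree argument from closing. The existence of $\alpha$ with $\alpha^{q+1}\ne1$ and the surjectivity of the norm map are standard and hold precisely because $q>2$. Everything else parallels the Euclidean argument of Theorem~\ref{thm3.3}(i) with $\varepsilon=1$ and $\alpha^2$ replaced by $\alpha^{q+1}$.
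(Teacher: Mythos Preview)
Your proof is correct and follows essentially the same approach as the paper: both place the evaluation points in the subfield $\mathbb{F}_q$, use norm surjectivity to solve $v_i^{q+1}=u_i$, perturb the first $s=k-\ell$ coordinates by an $\alpha$ with $\alpha^{q+1}\ne1$, and then apply Lemma~\ref{lem2.3} together with the key observation $f(a_i)^q=\tilde f(a_i)$ (the paper writes $F(x)$ for your $\tilde f(x)$) to force $g=\tilde f$ and hence $f(a_i)=0$ for $i\le s$. The degree-counting and converse arguments are identical.
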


\begin{proof}
   Let $a_{1}, a_{2}, \ldots, a_{n}$ be $n$ distinct elements of $\mathbb{F}_{q}$ and $u_{i}$ be defined as in Eq. \eqref{1}. Thus for $1 \leq i \leq n$, we have $u_{i} \in \mathbb{F}_{q}^{*}$ and hence there exist $v_{1}, \ldots,  v_{n} \in \mathbb{F}_{q^{2}}^{*}$ such that
  \[u_{i}=v_{i}^{q+1}.\]
  Choose $\alpha \in \mathbb{F}_{q^{2}}^{*}$ such that $\beta:=\alpha ^{q+1} \neq 1$. Denote $s:=k-\ell$. Put $\textbf{a}=(a_{1}, a_{2}, \ldots, a_{n})$ and $\textbf{v}=(\alpha v_{1},\ldots, \alpha v_{s}, v_{s+1},\ldots, v_{n})$. We consider the Hermitian hull of the $[n,k]_{q^{2}}$-MDS code $\mathcal{C}:=GRS_{k}(\textbf{a}, \textbf{v})$. For any $\textbf{c}=(\alpha v_{1}f(a_{1}),$ $\ldots, \alpha v_{s}f(a_{s}),v_{s+1}f(a_{s+1}),\ldots,v_{n}f(a_{n})) \in Hull_{H}(\mathcal{C})$ with $\deg(f(x)) \leq k-1$. By Lemma \ref{lem2.3}, there exists a polynomial $g(x) \in \mathbb{F}_{q^{2}}[x]$ with $\deg(g(x)) \leq n-k-1$  such that
\begin{eqnarray*}
\begin{split}
   & \Big(\alpha^{q+1} v_{1}^{q+1}f^{q}(a_{1}),\ldots, \alpha^{q+1} v_{s}^{q+1}f^{q}(a_{s}),v_{s+1}^{q+1}f^{q}(a_{s+1}), \ldots,v_{n}^{q+1}f^{q}(a_{n})\Big) \\
    & =\Big(u_{1}g(a_{1}),\ldots, u_{s}g(a_{s}),u_{s+1}g(a_{s+1}), \ldots,u_{n}g(a_{n})\Big),
\end{split}
\end{eqnarray*}
i.e.,
\begin{equation}\label{6}
  \begin{split}
     & (\beta u_{1}f^{q}(a_{1}),\ldots, \beta u_{s}f^{q}(a_{s}),u_{s+1}f^{q}(a_{s+1}),\ldots, u_{n}f^{q}(a_{n})) \\
      & =(u_{1}g(a_{1}),\ldots, u_{s}g(a_{s}), u_{s+1}g(a_{s+1}),\ldots, u_{n}g(a_{n})).
  \end{split}
\end{equation}
Write $f(x)=\sum_{i=0}^{k-1}f_{i}x^{i}$. Denote $F(x)=\sum_{i=0}^{k-1}f_{i}^{q}x^{i}$. Since $a_{i} \in \mathbb{F}_{q}$, $F(a_{i})=f^{q}(a_{i})$, for $1 \leq i\leq n$. From Eq. \eqref{6}, we have
\begin{equation}\label{7}
  \begin{split}
     & (\beta u_{1}F(a_{1}),\ldots, \beta u_{s}F(a_{s}),u_{s+1}F(a_{s+1}),\ldots, u_{n}F(a_{n})) \\
      & =(u_{1}g(a_{1}),\ldots, u_{s}g(a_{s}), u_{s+1}g(a_{s+1}),\ldots, u_{n}g(a_{n})).
  \end{split}
\end{equation}
From the last $n-s$ coordinates of Eq. \eqref{7}, we obtain $F(a_{i})=g(a_{i})$ for any $s < i \leq n$. Since $k \leq \lfloor\frac{n}{2}\rfloor$, $\deg(F(x)) \leq k-1 \leq n-k-1$. Note that $\deg(g(x)) \leq n-k-1$ and $n-s \geq n-k$, thus $F(x)= g(x)$. On the other hand, the first $s$ coordinates of Eq. \eqref{7} imply that
\[\beta u_{i}F(a_{i})= u_{i}g(a_{i})= u_{i}F(a_{i}),\]
for any $1 \leq i \leq s$. It follows from $\beta \neq 1$ and $u_{i} \neq 0$ that $F(a_{i})=0$, hence $f(a_{i})=0$. Thus
\[f(x)=h(x)\prod_{i=1}^{s}(x-a_{i}),\]
for some $h(x) \in \mathbb{F}_{q^{2}}[x]$ of $\deg(h(x)) \leq k-1-s $.
It deduces that $\dim(Hull_{H}(\mathcal{C})) \leq k-s$.

Conversely, let $f(x)$ be a polynomial of form $h(x)\prod_{i=1}^{s}(x-a_{i})$, where $h(x) \in \mathbb{F}_{q^{2}}[x]$ and $\deg(h(x)) \leq k-1-s $. We take $g(x)=F(x)$, then $\deg(g(x)) \leq k-1 \leq n-k-1$ and $g(a_{i})=F(a_{i})=f^{q}(a_{i})$. Thus
\begin{eqnarray*}
\begin{split}
  & \Big(\alpha^{q+1} v_{1}^{q+1}f^{q}(a_{1}),\ldots, \alpha^{q+1} v_{s}^{q+1}f^{q}(a_{s}),v_{s+1}^{q+1}f^{q}(a_{s+1}), \ldots,v_{n}^{q+1}f^{q}(a_{n})\Big)\\
    &=\Big(u_{1}g(a_{1}),\ldots, u_{s}g(a_{s}),u_{s+1}g(a_{s+1}), \ldots,u_{n}g(a_{n})\Big).
\end{split}
\end{eqnarray*}
By Lemma \ref{lem2.3}, the vector $(\alpha v_{1}f(a_{1}),\ldots, \alpha v_{s}f(a_{s}),v_{s+1}f(a_{s+1}),\ldots,v_{n}f(a_{n})) \in Hull_{H}(\mathcal{\mathcal{C}}).$
Therefore $\dim(Hull_{H}(\mathcal{C})) \geq k-s$, hence $\dim(Hull_{H}(\mathcal{C})) = k-s=\ell$.

\end{proof}

Similarly as the Euclidean case in Theorem \ref{thm3.3}, we can use an $\mathbb{F}_{r}$-subspace of $\mathbb{F}_{q^{2}}$ and its cosets to construct GRS codes with Hermitian hulls of arbitrary dimensions as follows.
\begin{theorem}\label{thm3.6}
 Let $q=p^{m} \geq 3$ and $r=p^{e}$, where $e \mid m$. Let $n=tr^{z}$, where $1 \leq t \leq r$ and $1 \leq z \leq 2\frac{m}{e}-1$. Then
 \begin{description}
   \item[(i)] for any $1 \leq k \leq \lfloor\frac{n-1+q}{q+1}\rfloor$ and $0 \leq \ell \leq k$, there exists a $q^{2}$-ary $[n,k]$-MDS code $\mathcal{C}$ with $\dim(Hull_{H}(\mathcal{C}))=\ell$;
   \item[(ii)] for any $1 \leq k \leq \lfloor\frac{n-1+q}{q+1}\rfloor$ and $0 \leq \ell \leq k-1$, there exists a $q^{2}$-ary $[n+1,k]$-MDS code $\mathcal{C}$ with $\dim(Hull_{H}(\mathcal{C}))=\ell$.
 \end{description}

\end{theorem}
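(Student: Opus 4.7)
The plan is to mirror the coset-based construction of Theorem \ref{thm3.3} inside $\mathbb{F}_{q^{2}}$, combined with the Hermitian criteria of Lemmas \ref{lem2.3} and \ref{lem2.4}. I would first choose an $\mathbb{F}_{r}$-subspace $H \subseteq \mathbb{F}_{q^{2}}$ of dimension $z$ (admissible because $\mathbb{F}_{q^{2}}$ has dimension $2m/e$ over $\mathbb{F}_{r}$ and $z \leq 2m/e - 1$), pick $\eta \in \mathbb{F}_{q^{2}} \setminus H$, and take the evaluation set $\{a_{1}, \ldots, a_{n}\} = \bigcup_{j=1}^{t} H_{j}$ as in \eqref{2}. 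Lemma \ref{lem3.1} applies verbatim and yields $\varepsilon \in \mathbb{F}_{q^{2}}^{*}$ with $\varepsilon u_{i} \in \mathbb{F}_{r}^{*}$ for every $i$. Because $\mathbb{F}_{r}^{*} \subseteq \mathbb{F}_{q}^{*}$ lies in the image of the surjective norm map $x \mapsto x^{q+1}$ from $\mathbb{F}_{q^{2}}^{*}$ onto $\mathbb{F}_{q}^{*}$, I can select $v_{i} \in \mathbb{F}_{q^{2}}^{*}$ satisfying $v_{i}^{q+1} = \varepsilon u_{i}$, and fix $\alpha \in \mathbb{F}_{q^{2}}^{*}$ with $\beta := \alpha^{q+1} \neq 1$.

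For part (i), set $s = k - \ell$, $\textbf{v} = (\alpha v_{1}, \ldots, \alpha v_{s}, v_{s+1}, \ldots, v_{n})$ and $\mathcal{C} := GRS_{k}(\textbf{a}, \textbf{v})$. Applying Lemma \ref{lem2.3} to a codeword in $Hull_{H}(\mathcal{C})$ arising from a polynomial $f$ and substituting $v_{i}^{q+1} = \varepsilon u_{i}$ reduces the hull condition to $\beta \varepsilon f^{q}(a_{i}) = g(a_{i})$ for $i \leq s$ and $\varepsilon f^{q}(a_{i}) = g(a_{i})$ for $i > s$, where $\deg(g) \leq n - k - 1$. Introducing $F(x) := \sum_{j} f_{j}^{q} x^{j}$, so that $f^{q}(a_{i}) = F(a_{i}^{q})$, the polynomial $Q(x) := \varepsilon F(x^{q}) - g(x)$ has degree at most $\max\{q(k-1), n-k-1\}$ and vanishes at the $n - s$ points $a_{s+1}, \ldots, a_{n}$. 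The hypothesis $k \leq \lfloor (n - 1 + q)/(q+1) \rfloor$ is precisely the condition $q(k-1) \leq n - k - 1$, which forces $\deg(Q) < n - s$ and hence $Q \equiv 0$. Substituting back into the first $s$ equations yields $(\beta - 1) \varepsilon f^{q}(a_{i}) = 0$, so $f(a_{i}) = 0$ for $i \leq s$, giving $f(x) = h(x) \prod_{i=1}^{s} (x - a_{i})$ with $\deg(h) \leq k - 1 - s$, and therefore $\dim(Hull_{H}(\mathcal{C})) \leq \ell$. The reverse inclusion is a direct verification: for any such $f$, taking $g(x) := \varepsilon F(x^{q})$ satisfies all the hypotheses of Lemma \ref{lem2.3}.

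Part (ii) runs along parallel lines with $\mathcal{C} := GRS_{k}(\textbf{a}, \textbf{v}, \infty)$, $s = k - 1 - \ell$, and Lemma \ref{lem2.4} in place of Lemma \ref{lem2.3}. The additional ingredient is the extended-coordinate relation $f_{k-1}^{q} = -g_{n-k}$: the strict inequality $q(k-1) < n - k$ (again a direct consequence of the bound on $k$) means that once one deduces $g(x) = \varepsilon F(x^{q})$ from the same root-counting argument, the coefficient $g_{n-k}$ vanishes automatically, forcing $f_{k-1} = 0$ and hence $\deg(f) \leq k - 2$. This is precisely what cuts the top admissible hull dimension from $k$ down to $k - 1$ in this part, matching the range $0 \leq \ell \leq k - 1$ in the statement.

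The main technical obstacle I anticipate is the Frobenius twist appearing in $f^{q}(a_{i}) = F(a_{i}^{q})$. Because the evaluation points now lie in $\mathbb{F}_{q^{2}}$ rather than in $\mathbb{F}_{q}$ as in Theorem \ref{thm3.5}, the polynomial identity one must verify involves $F(x^{q})$, whose degree $q(k-1)$ is considerably larger than $\deg(f) = k - 1$. This inflation is exactly why the admissible range of $k$ shrinks from $\lfloor n/2 \rfloor$ in the Euclidean setting of Theorem \ref{thm3.3} down to $\lfloor (n - 1 + q)/(q+1) \rfloor$ here; once this mechanism is pinned down, the rest of the argument is a careful but routine adaptation of the earlier bookkeeping.
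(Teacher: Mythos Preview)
Your proposal is correct and matches the paper's proof essentially step for step. The only difference is notational: the paper writes $f^{q}(x)$ directly for the polynomial $(f(x))^{q}$, which coincides with your $F(x^{q})$ since $F(x^{q}) = \sum_{j} f_{j}^{q} x^{qj} = (f(x))^{q}$ in characteristic $p$, so the ``Frobenius twist'' you flag as the main obstacle is already absorbed into the degree bound $q(k-1) \leq n-k-1$ without any extra work.
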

\begin{proof}
  Let $H$ be an $\mathbb{F}_{r}$-subspace of $\mathbb{F}_{q^{2}}$ of dimension $z$ and $\eta \in \mathbb{F}_{q^{2}} \setminus H$. We can define the subset $\{a_{1}, a_{2}, \ldots, a_{n}\}$ of $\mathbb{F}_{q^{2}}$ similarly as Eq. \eqref{2} of Subsection 3.1.  Choose $\alpha \in \mathbb{F}_{q^{2}}^{*}$ such that $\beta:=\alpha^{q+1} \neq 1$. Let $\varepsilon$ be defined as in Lemma \ref{lem3.1}. Note that each element of $\mathbb{F}_{q}$ is a $(q+1)$-th power in $\mathbb{F}_{q^{2}}$.  Thus, by Lemma \ref{lem3.1},
  there exist $v_{1}, \ldots,  v_{n} \in \mathbb{F}_{q^{2}}^{*}$ such that
        \[ v_{i}^{q+1}=\varepsilon u_{i}, \textnormal{ for } 1 \leq i \leq n.\]

    \textbf{(i)} Denote $s:=k-\ell$. Put $\textbf{a}=(a_{1}, a_{2}, \ldots, a_{n})$ and $\textbf{v}=(\alpha v_{1},\ldots, \alpha v_{s}, v_{s+1},\ldots, v_{n})$. We consider the Hermitian hull of the $[n,k]_{q^{2}}$-MDS code $\mathcal{C}:=GRS_{k}(\textbf{a}, \textbf{v})$. For any $\textbf{c}=(\alpha v_{1}f(a_{1}),$ $\ldots, \alpha v_{s}f(a_{s}),v_{s+1}f(a_{s+1}),\ldots,v_{n}f(a_{n})) \in Hull_{H}(\mathcal{C})$, where $\deg(f(x)) \leq k-1$. By Lemma \ref{lem2.3}, there exists a polynomial $g(x) \in \mathbb{F}_{q^{2}}[x]$ with $\deg(g(x)) \leq n-k-1$  such that
\begin{eqnarray*}
\begin{split}
   & \Big(\alpha^{q+1} v_{1}^{q+1}f^{q}(a_{1}),\ldots, \alpha^{q+1} v_{s}^{q+1}f^{q}(a_{s}),v_{s+1}^{q+1}f^{q}(a_{s+1}), \ldots,v_{n}^{q+1}f^{q}(a_{n})\Big)\\
    & =\Big(u_{1}g(a_{1}),\ldots, u_{s}g(a_{s}),u_{s+1}g(a_{s+1}), \ldots,u_{n}g(a_{n})\Big),
\end{split}
\end{eqnarray*}
i.e.,
\begin{equation}\label{8}
  \begin{split}
     & \varepsilon(\beta u_{1}f^{q}(a_{1}),\ldots, \beta u_{s}f^{q}(a_{s}),u_{s+1}f^{q}(a_{s+1}),\ldots, u_{n}f^{q}(a_{n})) \\
      & =(u_{1}g(a_{1}),\ldots, u_{s}g(a_{s}), u_{s+1}g(a_{s+1}),\ldots, u_{n}g(a_{n})).
  \end{split}
\end{equation}
From the last $n-s$ coordinates of Eq. \eqref{8}, we obtain that $\varepsilon f^{q}(a_{i})=g(a_{i})$ for any $s < i \leq n$. Since $k \leq \lfloor\frac{n-1+q}{q+1}\rfloor$, $\deg(f^{q}(x)) \leq q(k-1) \leq n-k-1$. Note that $\deg(g(x)) \leq n-k-1$ and $n-s \geq n-k$, thus $\varepsilon f^{q}(x)= g(x)$. On the other hand, the first $s$ coordinates of Eq. \eqref{8} imply that
\[\varepsilon\beta u_{i}f^{q}(a_{i})= u_{i}g(a_{i})= u_{i}\varepsilon f^{q}(a_{i}),\]
for any $1 \leq i \leq s$. It follows from $\beta \neq 1$ and $\varepsilon u_{i} \neq 0$ that $f^{q}(a_{i})=0$, i.e., $f(a_{i})=0$. Thus
\[f(x)=h(x)\prod_{i=1}^{s}(x-a_{i}),\]
for some $h(x) \in \mathbb{F}_{q^{2}}[x]$ of $\deg(h(x)) \leq k-1-s $.
It deduces that $\dim(Hull_{H}(\mathcal{C})) \leq k-s$.

Conversely, let $f(x)$ be a polynomial of form $h(x)\prod_{i=1}^{s}(x-a_{i})$, where $h(x) \in \mathbb{F}_{q^{2}}[x]$ and $\deg(h(x)) \leq k-1-s $. We take $g(x)=\varepsilon f^{q}(x)$, then $\deg(g(x)) \leq q(k-1) \leq n-k-1$ and
\begin{eqnarray*}
\begin{split}
  & \Big(\alpha^{q+1} v_{1}^{q+1}f^{q}(a_{1}),\ldots, \alpha^{q+1} v_{s}^{q+1}f^{q}(a_{s}),v_{s+1}^{q+1}f^{q}(a_{s+1}), \ldots,v_{n}^{q+1}f^{q}(a_{n})\Big)\\
    &=\Big(u_{1}g(a_{1}),\ldots, u_{s}g(a_{s}),u_{s+1}g(a_{s+1}), \ldots,u_{n}g(a_{n})\Big).
\end{split}
\end{eqnarray*}
By Lemma \ref{lem2.3}, the vector $(\alpha v_{1}f(a_{1}),\ldots, \alpha v_{s}f(a_{s}),v_{s+1}f(a_{s+1}),\ldots,v_{n}f(a_{n})) \in Hull_{H}(\mathcal{C}).$
Therefore $\dim(Hull_{H}(\mathcal{C})) \geq k-s$, hence $\dim(Hull_{H}(\mathcal{C})) = k-s=\ell$.

\textbf{(ii)} Denote $s:=k-1-\ell$. Put $\textbf{a}=(a_{1}, a_{2}, \ldots, a_{n})$ and $\textbf{v}=(\alpha v_{1},\ldots, \alpha v_{s}, v_{s+1},\ldots, v_{n})$. We consider the Hermitian hull of the $[n+1,k]_{q^{2}}$-MDS code $\mathcal{C}:=GRS_{k}(\textbf{a}, \textbf{v}, \infty)$. For any $\textbf{c}=(\alpha v_{1}f(a_{1}),\ldots, \alpha v_{s}f(a_{s}),v_{s+1}f(a_{s+1}),\ldots,v_{n}f(a_{n}), f_{k-1}) \in Hull_{H}(\mathcal{C})$, where $\deg(f(x)) \leq k-1$. By Lemma \ref{lem2.4}, there exists a polynomial $g(x) \in \mathbb{F}_{q^{2}}[x]$ with $\deg(g(x)) \leq n-k$  such that
\begin{equation}\label{9}
  \begin{split}
     & \Big(\varepsilon\beta u_{1}f^{q}(a_{1}),\ldots, \varepsilon\beta u_{s}f^{q}(a_{s}),\varepsilon u_{s+1}f^{q}(a_{s+1}), \ldots, \varepsilon u_{n}f^{q}(a_{n}), f^{q}_{k-1}\Big)\\
      &  =\Big(u_{1}g(a_{1}),\ldots, u_{s}g(a_{s}), u_{s+1}g(a_{s+1}), \ldots, u_{n}g(a_{n}),-g_{n-k}\Big).
  \end{split}
\end{equation}
Then from Eq. \eqref{9}, we can similarly deduce that $f^{q}_{k-1}=-g_{n-k}$ and  $\varepsilon f^{q}(x)= g(x)$. If $f_{k-1} \neq 0$, then $q(k-1)=n-k$, which contradicts to the assumption that $k \leq \lfloor\frac{n-1+q}{q+1}\rfloor$. Thus $f_{k-1}=0$, i.e., $\deg(f(x)) \leq k-2$. On the other hand, the first $s$ coordinates of Eq. \eqref{9} imply that
\[\varepsilon\beta u_{i}f^{q}(a_{i})= u_{i}g(a_{i})=\varepsilon u_{i}f^{q}(a_{i}),\]
for any $1 \leq i \leq s$. It follows from $\beta \neq 1$ and $\varepsilon u_{i} \neq 0$ that $f^{q}(a_{i})=0$, i.e., $f(a_{i})=0$. Thus
\[f(x)=h(x)\prod_{i=1}^{s}(x-a_{i}),\]
for some $h(x) \in \mathbb{F}_{q^{2}}[x]$ of $\deg(h(x)) \leq k-2-s $.
It deduces that $\dim(Hull_{H}(\mathcal{C})) \leq k-1-s$.

Conversely, let $f(x)$ be a polynomial of form $h(x)\prod_{i=1}^{s}(x-a_{i})$, where $h(x) \in \mathbb{F}_{q^{2}}[x]$ and $\deg(h(x)) \leq k-2-s $. We can similarly prove that
\[(\alpha v_{1}f(a_{1}),\ldots, \alpha v_{s}f(a_{s}),v_{s+1}f(a_{s+1}),\ldots,v_{n}f(a_{n}),0) \in Hull_{H}(\mathcal{C}).\]
Thus $\dim(Hull_{H}(\mathcal{C})) \geq k-1-s$, hence $\dim(Hull_{H}(\mathcal{C})) = k-1-s=\ell$.

The proof is completed.
\end{proof}

In the following, we consider a multiplicative subgroup of $\mathbb{F}_{q^{2}}^{*}$ and its cosets. Suppose $n' \mid (q^{2}-1)$. We write $n'=\frac{n'}{\gcd(n', q+1)}\cdot \gcd(n', q+1)$. For convenience, we denote $n_{1}=\frac{n'}{\gcd(n',q+1)}$ and $n_{2}=\frac{n'}{n_{1}}=\gcd(n', q+1)$. Then $n_{1} \mid (q-1)\frac{q+1}{n_{2}}$. Note that $\gcd(n_{1}, \frac{q+1}{n_{2}})=\gcd(\frac{n'}{n_{2}}, \frac{q+1}{n_{2}})=1$, hence $n_{1} \mid (q-1)$. Let $\omega$ be a primitive element of $\mathbb{F}_{q^{2}}$. Let $G$ and $H$ be the subgroups of $\mathbb{F}_{q^{2}}^{*}$ generated by $\omega^{\frac{q^{2}-1}{n'}}$ and $\omega^{\frac{q+1}{n_{2}}}$, respectively. Then $|G|=n'$ and $|H|=(q-1)n_{2}$. Note that $\frac{q^{2}-1}{n'}=\frac{q+1}{n_{2}}\cdot\frac{q-1}{n_{1}}$, thus $ \frac{q+1}{n_{2}} \mid \frac{q^{2}-1}{n'}$, which deduce that $G$ is a subgroup of $H$.
Then there exist $\beta_{1}, \ldots , \beta_{\frac{q-1}{n_{1}}} \in H$
such that $\{\beta_{b} G\}^{\frac{q-1}{n_{1}}}_{b
=1}$ represent all cosets of $H/G$.

Now, let $n=tn'$, where $1 \leq t \leq \frac{q-1}{n_{1}}$. Set $A_{b}=\beta_{b}G$ and $A=\bigcup^{t}_{b=1}A_{b}$. Suppose

 \begin{equation}\label{10}
   A:=\{a_{1}, a_{2}, \ldots , a_{n}\}.
\end{equation}
 We calculate the value of $u_{i}$ defined by Eq. \eqref{1} as follows.

\begin{lemma}\label{lem3.7}
  Keep the notations as above. Given $1 \leq i \leq n$, suppose $a_{i} \in A_{b}$ for some $1 \leq b \leq t$. Then
  \[u_{i}=\frac{1}{n'}a_{i}\beta_{b}^{-n'}\prod_{1 \leq s \leq t, s \neq b}(\beta_{b}^{n'}-\beta_{s}^{n'})^{-1}.\]
  Moreover, we have $a_{i}^{-1}u_{i} \in \mathbb{F}^{*}_{q}$.
\end{lemma}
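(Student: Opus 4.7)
The plan is to evaluate the product $u_i = \prod_{j \neq i}(a_i - a_j)^{-1}$ directly by partitioning the indices according to which coset $A_b = \beta_b G$ each $a_j$ lies in. If $a_i \in A_b$, write $a_i = \beta_b \gamma_i$ with $\gamma_i \in G$, and split
$$\prod_{j \neq i}(a_i - a_j) = \Big(\prod_{a_j \in A_b,\, a_j \neq a_i}(a_i - a_j)\Big)\prod_{s \neq b}\prod_{a_j \in A_s}(a_i - a_j).$$
The two kinds of inner products will be handled separately, using the key fact that $G$ is the group of $n'$-th roots of unity in $\mathbb{F}_{q^2}^*$, so that $\prod_{\gamma \in G}(x - \gamma) = x^{n'} - 1$ in $\mathbb{F}_{q^2}[x]$.

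First I would treat the same-coset factor. Factoring $\beta_b$ out of each term gives $\beta_b^{n'-1}\prod_{\gamma \in G,\,\gamma \neq \gamma_i}(\gamma_i - \gamma)$, and differentiating the identity $x^{n'} - 1 = \prod_{\gamma \in G}(x-\gamma)$ and evaluating at $\gamma_i$ yields $\prod_{\gamma \neq \gamma_i}(\gamma_i - \gamma) = n'\gamma_i^{n'-1} = n'\gamma_i^{-1}$, where the last equality uses $\gamma_i^{n'}=1$. Combining, this factor equals $n'\beta_b^{n'}/a_i$. For the cross-coset factors with $s \neq b$, I would factor $\beta_s$ out of each term and apply $\prod_{\gamma \in G}(x-\gamma) = x^{n'}-1$ at $x = a_i/\beta_s$, obtaining
$$\prod_{\gamma \in G}(a_i - \beta_s \gamma) = \beta_s^{n'}\bigl((a_i/\beta_s)^{n'} - 1\bigr) = a_i^{n'} - \beta_s^{n'} = \beta_b^{n'} - \beta_s^{n'},$$
since $a_i^{n'} = \beta_b^{n'}\gamma_i^{n'} = \beta_b^{n'}$. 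Assembling these expressions and inverting yields the claimed formula $u_i = \tfrac{1}{n'} a_i \beta_b^{-n'}\prod_{s\neq b}(\beta_b^{n'}-\beta_s^{n'})^{-1}$.

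For the second assertion, the goal is to show $a_i^{-1}u_i = \tfrac{1}{n'}\beta_b^{-n'}\prod_{s\neq b}(\beta_b^{n'}-\beta_s^{n'})^{-1}$ lies in $\mathbb{F}_q^*$. The point is that each $\beta_b$ is a power of $\omega^{(q+1)/n_2}$, so $\beta_b^{n'}$ is a power of $\omega^{(q+1)n'/n_2}=\omega^{(q+1)n_1}$, which lies in $\mathbb{F}_q^* = \langle \omega^{q+1}\rangle$ since $|\mathbb{F}_q^*| = (q^2-1)/(q+1)$. Hence $\beta_b^{n'}\in \mathbb{F}_q^*$ for every $b$, so every factor defining $a_i^{-1}u_i$ lies in $\mathbb{F}_q$; nonvanishing of the product follows because distinct cosets $\beta_b G$ correspond to distinct values of $\beta_b^{n'}$ (as $(\beta_b/\beta_s)^{n'}=1$ would force $\beta_b/\beta_s \in G$). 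Finally, $n' \mid q^2 - 1$ forces $\gcd(n',p)=1$, so $n'$ is invertible in $\mathbb{F}_q$. I do not anticipate any genuinely hard step: the main obstacle is just the clean bookkeeping of the coset decomposition and the identification of $\beta_b^{n'}$ as an element of the subfield $\mathbb{F}_q$.
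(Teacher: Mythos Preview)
Your proof is correct and follows essentially the same approach as the paper: both split $\prod_{j\neq i}(a_i-a_j)$ according to cosets $A_s=\beta_s G$, evaluate the same-coset piece as $n'\beta_b^{n'}a_i^{-1}$ (you via the derivative of $x^{n'}-1$, the paper via $\prod_{j=1}^{n'-1}(1-\theta^j)=n'$) and each cross-coset piece as $\beta_b^{n'}-\beta_s^{n'}$, then conclude $a_i^{-1}u_i\in\mathbb{F}_q^*$ from $\beta_s^{n'}\in\langle\omega^{q+1}\rangle=\mathbb{F}_q^*$. Your extra remarks on the nonvanishing of $\beta_b^{n'}-\beta_s^{n'}$ and the invertibility of $n'$ in $\mathbb{F}_q$ make explicit points the paper leaves implicit.
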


\begin{proof}
  Let $\theta=\omega^{\frac{q^{2}-1}{n'}}$ be the generator of $G$. Suppose $a_{i}=\beta_{b}\theta^{\ell}$ for some $1 \leq \ell \leq n'$. Then
  \[u_{i}=\prod_{x_{b} \in A_{b}}(a_{i}-x_{b})^{-1}\prod_{1 \leq s \leq t, s \neq b}\prod_{x_{s} \in A_{s}}(a_{i}-x_{s})^{-1}.\]
  Since $\prod\limits_{1 \leq j \leq n'-1}(1-\theta^{j})=n'$, we have
  \begin{eqnarray*}
  % \nonumber to remove numbering (before each equation)
    \prod_{x_{b} \in A_{b}, x_{b} \neq a_{i}}(a_{i}-x_{b}) &=& \prod_{1 \leq j \leq n', j \neq \ell}(\beta_{b}\theta^{\ell}-\beta_{b}\theta^{j}) \\
      &=& (\beta_{b}\theta^{\ell})^{n'-1}\prod_{1 \leq j \leq n'-1}(1-\theta^{j})\\
      &=&n'a_{i}^{-1}\beta_{b}^{n'}.
  \end{eqnarray*}
Since $\prod\limits_{1 \leq j \leq n'}(x-\beta\theta^{j})=x^{n'}-\beta^{n'}$, we have
\[\prod_{x_{s} \in A_{s}}(a_{i}-x_{s})=\prod_{1 \leq j \leq n'}(\beta_{b}\theta^{\ell}-\beta_{s}\theta^{j})=\beta_{b}^{n'}-\beta_{s}^{n'}. \]
The first conclusion then follows. For any $\beta_{s} \in H$, there exists an integer $j$ such that $\beta_{s}=\omega^{j\frac{q+1}{n_{2}}}$. Thus
$\beta_{s}^{n'}=\omega^{jn'\frac{q+1}{n_{2}}}=\omega^{jn_{1}(q+1)}$, which is an element of $\mathbb{F}^{*}_{q}$. The second conclusion then follows from the first conclusion.
\end{proof}
From the above discussions, we provide a construction of GRS codes with Hermitian hulls of arbitrary dimensions as follows.
\begin{theorem}\label{thm3.8}
Let $q>2$ be a prime power and $n' \mid (q^{2}-1)$. Let $n=tn'$, where $1 \leq t \leq \frac{q-1}{n_{1}}$ and $n_{1}=\frac{n'}{\gcd(n',q+1)}$. Then
 for any $1 \leq k \leq \lfloor\frac{n+q}{q+1}\rfloor$ and $0 \leq \ell \leq k-1$, there exists a $q^{2}$-ary $[n,k]$-MDS code $\mathcal{C}$ with $\dim(Hull_{H}(\mathcal{C}))=\ell$.
\end{theorem}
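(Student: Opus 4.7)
The plan is to adapt the template of the proof of Theorem \ref{thm3.6} to this multiplicative setting, but to absorb the crucial new feature that Lemma \ref{lem3.7} gives $u_i$ proportional to $a_i$ rather than to a scalar independent of $i$.

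Set $s := k-1-\ell$, so $0 \leq s \leq k-1$. By Lemma \ref{lem3.7}, for each $i$ with $a_i \in A_b$ we have $u_i = a_i \lambda_b$ where $\lambda_b := a_i^{-1}u_i \in \mathbb{F}_q^{*}$ depends only on the coset. Since the norm map $v \mapsto v^{q+1}$ is surjective from $\mathbb{F}_{q^2}^{*}$ onto $\mathbb{F}_q^{*}$, I can pick $v_i \in \mathbb{F}_{q^2}^{*}$ with $v_i^{q+1} = \lambda_b$ whenever $a_i \in A_b$. Choose $\alpha \in \mathbb{F}_{q^2}^{*}$ with $\beta := \alpha^{q+1} \neq 1$, and define $\mathcal{C} := GRS_k(\mathbf{a}, \mathbf{v})$ with $\mathbf{v} = (\alpha v_1, \ldots, \alpha v_s, v_{s+1}, \ldots, v_n)$.

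For a codeword $\mathbf{c} \in Hull_H(\mathcal{C})$ coming from $f(x)$ with $\deg f \leq k-1$, Lemma \ref{lem2.3} produces $g(x) \in \mathbb{F}_{q^2}[x]$ with $\deg g \leq n-k-1$ such that, after cancelling $\lambda_{b_i}$, the hull condition reads
\[
\beta f^q(a_i) = a_i g(a_i)\ (1 \leq i \leq s), \qquad f^q(a_i) = a_i g(a_i)\ (s+1 \leq i \leq n).
\]
The key trick is to introduce the auxiliary polynomial $G(x) := x g(x)$, of degree at most $n-k$, in order to match the shape of the right-hand side. The hypothesis $k \leq \lfloor (n+q)/(q+1) \rfloor$ is equivalent to $q(k-1) \leq n-k$, so both $f^q(x)$ and $G(x)$ have degree at most $n-k$; since they agree at the $n-s \geq n-k+1$ points $a_{s+1}, \ldots, a_n$, they must be equal as polynomials: $f^q(x) = x g(x)$. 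Comparing constant terms forces $f_0 = 0$, so $x \mid f(x)$. Substituting $a_i g(a_i) = f^q(a_i)$ back into the first $s$ equations gives $(1-\beta) f^q(a_i) = 0$, hence $f(a_i) = 0$ for $1 \leq i \leq s$. Since $0 \notin A$, the zeros $0, a_1, \ldots, a_s$ are distinct, and
\[
f(x) = x \prod_{i=1}^{s} (x - a_i)\, h(x), \qquad \deg h \leq k-2-s,
\]
which shows $\dim Hull_H(\mathcal{C}) \leq k-1-s = \ell$. For the reverse inclusion, given any such $f$, set $g(x) := f^q(x)/x$; this is a polynomial (because $f(0)=0$) of degree $q(k-1) - 1 \leq n-k-1$, and one checks directly that both Hermitian-dual equations hold, so the corresponding vector lies in $Hull_H(\mathcal{C})$, yielding equality.

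The main subtlety, compared with Theorem \ref{thm3.6}, is precisely the step $f^q(x) = x g(x)$: the extra factor $x$, introduced to reconcile the $a_i$-dependence of $u_i$ coming from Lemma \ref{lem3.7}, imposes the additional root $f(0) = 0$. This is exactly why the range of attainable hull dimensions here is $0 \leq \ell \leq k-1$ rather than $0 \leq \ell \leq k$, and why the degree budget tightens to $k \leq \lfloor (n+q)/(q+1) \rfloor$ instead of the $\lfloor (n-1+q)/(q+1) \rfloor$ of Theorem \ref{thm3.6}(i).
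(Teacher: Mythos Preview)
Your proof is correct and follows essentially the same approach as the paper: the same choice of $v_i$ with $v_i^{q+1}=a_i^{-1}u_i$, the same perturbation by $\alpha$ on the first $s=k-1-\ell$ coordinates, and the same key identity $f^q(x)=xg(x)$ forcing the extra root at $0$. The only differences are cosmetic (you name the auxiliary $G(x)=xg(x)$ and explicitly note $0\notin A$), and your closing remark explaining why the hull range drops to $\ell\leq k-1$ is a nice addition.
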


\begin{proof}
Let $a_{1}, a_{2}, \ldots , a_{n}$ be defined as Eq. \eqref{10}. From Lemma \ref{lem3.7}, there exist $v_{1}, \ldots,  v_{n} \in \mathbb{F}_{q^{2}}^{*}$ such that
        \[ a_{i}^{-1}u_{i}= v_{i}^{q+1},\textnormal{ for all } 1 \leq i \leq n.\]
         Let $\alpha \in \mathbb{F}_{q^{2}}^{*}$ such that $\beta:=\alpha^{q+1} \neq 1$. Denote $s:=k-1-\ell$. Put $\textbf{a}=(a_{1}, a_{2}, \ldots, a_{n})$ and $\textbf{v}=(\alpha v_{1},\ldots, \alpha v_{s}, v_{s+1},\ldots, v_{n})$. We consider the Hermitian hull of the $[n,k]_{q^{2}}$-MDS code $\mathcal{C}:=GRS_{k}(\textbf{a}, \textbf{v})$. For any $\textbf{c}=(\alpha v_{1}f(a_{1}),\ldots, \alpha v_{s}f(a_{s}),v_{s+1}f(a_{s+1}),\ldots, v_{n}f(a_{n})) \in Hull_{H}(\mathcal{C})$ with $\deg(f(x)) \leq k-1$. By Lemma \ref{lem2.3}, there exists a polynomial $g(x) \in \mathbb{F}_{q^{2}}[x]$ with $\deg(g(x)) \leq n-k-1$  such that
\begin{eqnarray*}
\begin{split}
   & \Big(\alpha^{q+1} v_{1}^{q+1}f^{q}(a_{1}),\ldots, \alpha^{q+1} v_{s}^{q+1}f^{q}(a_{s}),v_{s+1}^{q+1}f^{q}(a_{s+1}), \ldots, v_{n}^{q+1}f^{q}(a_{n})\Big) \\
    &=\Big(u_{1}g(a_{1}),\ldots, u_{s}g(a_{s}),u_{s+1}g(a_{s+1}), \ldots,u_{n}g(a_{n})\Big),
\end{split}
\end{eqnarray*}

i.e.,
\begin{equation}\label{11}
  \begin{split}
     & \Big(\beta u_{1}f^{q}(a_{1}),\ldots, \beta u_{s}f^{q}(a_{s}),u_{s+1}f^{q}(a_{s+1}), \ldots, u_{n}f^{q}(a_{n})\Big) \\
      & =\Big(a_{1}u_{1}g(a_{1}),\ldots, a_{s}u_{s}g(a_{s}), a_{s+1}u_{s+1}g(a_{s+1}),\ldots, a_{n}u_{n}g(a_{n})\Big).
  \end{split}
\end{equation}
From the last $n-s$ coordinates of Eq. \eqref{11}, we obtain that $f^{q}(a_{i})=a_{i}g(a_{i})$ for any $s < i \leq n$. Since $k \leq \lfloor\frac{n+q}{q+1}\rfloor$, $\deg(f^{q}(x)) \leq q(k-1) \leq n-k$. Note that $\deg(xg(x)) \leq n-k$ and $n-s=n-k+\ell+1 \geq n-k+1$, thus $f^{q}(x)=xg(x)$. Hence, $x \mid f(x)$. On the other hand, the first $s$ coordinates of Eq. \eqref{11} imply that
\[\beta u_{i}f^{q}(a_{i})= u_{i}a_{i}g(a_{i})= u_{i}f^{q}(a_{i}),\]
for any $1 \leq i \leq s$. It follows from $\beta \neq 1$ and $u_{i} \neq 0$ that $f^{q}(a_{i})=0$, i.e., $f(a_{i})=0$. Thus
\[f(x)=xh(x)\prod_{i=1}^{s}(x-a_{i}),\]
for some $h(x) \in \mathbb{F}_{q^{2}}[x]$ of $\deg(h(x)) \leq k-2-s $.
It deduces that $\dim(Hull_{H}(\mathcal{C})) \leq k-1-s$.

Conversely, let $f(x)$ be a polynomial of form $xh(x)\prod_{i=1}^{s}(x-a_{i})$, where $h(x) \in \mathbb{F}_{q^{2}}[x]$ and $\deg(h(x)) \leq k-2-s $. We take $g(x)=\frac{f^{q}(x)}{x}$, then $\deg(g(x)) \leq q(k-1)-1 \leq n-k-1$ and

\begin{eqnarray*}
\begin{split}
  & \Big(\alpha^{q+1} v_{1}^{q+1}f^{q}(a_{1}),\ldots, \alpha^{q+1} v_{s}^{q+1}f^{q}(a_{s}),v_{s+1}^{q+1}f^{q}(a_{s+1}), \ldots, v_{n}^{q+1}f^{q}(a_{n})\Big)  \\
    & =\Big(u_{1}g(a_{1}),\ldots, u_{s}g(a_{s}),u_{s+1}g(a_{s+1}), \ldots,u_{n}g(a_{n})\Big).
\end{split}
\end{eqnarray*}
By Lemma \ref{lem2.3},
\[(v_{1}f(a_{1}),\ldots, v_{s}f(a_{s}), f(a_{s+1}),\ldots, f(a_{n})) \in Hull_{H}(\mathcal{C}).\]
Therefore $\dim(Hull_{H}(\mathcal{C})) \geq k-1-s$, hence $\dim(Hull_{H}(\mathcal{C})) = k-1-s=\ell$. The proof is completed.
\end{proof}

In Theorem \ref{thm3.8}, by adding the zero element, we can obtain a family of GRS codes of length $n+1$ with Hermitian hulls of arbitrary dimensions.
\begin{theorem}\label{thm3.9}
 Let $q>2$ be a prime power and $n' \mid (q^{2}-1)$. Let $n=tn'$, where $1 \leq t \leq \frac{q-1}{n_{1}}$ and $n_{1}=\frac{n'}{\gcd(n',q+1)}$. Then
 for any $1 \leq k \leq \lfloor\frac{n+q}{q+1}\rfloor$ and $0 \leq \ell \leq k$, there exists a $q^{2}$-ary $[n+1,k]$-MDS code $\mathcal{C}$ with $\dim(Hull_{H}(\mathcal{C}))=\ell$.
\end{theorem}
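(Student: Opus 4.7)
The plan is to build on Theorem~\ref{thm3.8} by appending $0$ to the evaluation set. The key observation is that this single extra evaluation point converts the setup of Theorem~\ref{thm3.8} (where the multiplicative factor $a_i$ forces $x \mid f$, restricting $\ell$ to at most $k-1$) into the cleaner setup of Theorem~\ref{thm3.5} (where each $u_i$ is itself a $(q+1)$-th power). Indeed, each original $u_i$ picks up an additional factor $(a_i - 0)^{-1} = a_i^{-1}$, and Lemma~\ref{lem3.7} already guarantees $a_i^{-1}u_i \in \mathbb{F}_q^{*}$. Removing the ``$x \mid f$'' constraint is precisely what allows $\ell$ to reach $k$.

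First I would fix the setup: take $a_1, \ldots, a_n$ as in Eq.~\eqref{10}, set $a_{n+1} = 0$, put $\textbf{a}' = (a_1, \ldots, a_n, a_{n+1})$, and let $u_i'$ denote the quantities associated with $\textbf{a}'$ via Eq.~\eqref{1}. Then $u_i' = u_i\, a_i^{-1} \in \mathbb{F}_q^{*}$ for $1 \leq i \leq n$, while $u_{n+1}' = (-1)^n \prod_{i=1}^n a_i^{-1}$. To show $u_{n+1}' \in \mathbb{F}_q^{*}$, I use $A_b = \beta_b G$ with $G$ generated by $\theta = \omega^{(q^2-1)/n'}$ of order $n'$ to compute
\[
\prod_{x \in A_b} x \;=\; \beta_b^{n'}\,\theta^{n'(n'-1)/2},
\]
where $\beta_b^{n'} \in \mathbb{F}_q^{*}$ by the proof of Lemma~\ref{lem3.7}, and $\theta^{n'(n'-1)/2} \in \{\pm 1\} \subset \mathbb{F}_q^{*}$ after reducing the exponent modulo $n'$ (it equals $1$ if $n'$ is odd and $\theta^{n'/2} = -1$ if $n'$ is even). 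Multiplying over $b$ gives $\prod_{i=1}^n a_i \in \mathbb{F}_q^{*}$, hence $u_{n+1}' \in \mathbb{F}_q^{*}$. Since the norm $x \mapsto x^{q+1}$ maps $\mathbb{F}_{q^2}^{*}$ onto $\mathbb{F}_q^{*}$, I can select $v_1, \ldots, v_{n+1} \in \mathbb{F}_{q^2}^{*}$ with $v_i^{q+1} = u_i'$ for all $i$.

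Next, for any $0 \leq \ell \leq k$, set $s = k - \ell$, choose $\alpha \in \mathbb{F}_{q^2}^{*}$ with $\beta := \alpha^{q+1} \neq 1$, and put $\textbf{v}' = (\alpha v_1, \ldots, \alpha v_s, v_{s+1}, \ldots, v_{n+1})$. Then $\mathcal{C} := GRS_k(\textbf{a}', \textbf{v}')$ is an $[n+1, k]$-MDS code whose Hermitian hull I analyse exactly as in Theorem~\ref{thm3.5}: given a codeword corresponding to $f(x)$ with $\deg(f) \leq k-1$, Lemma~\ref{lem2.3} supplies $g(x)$ with $\deg(g) \leq n-k$ matching the coordinates. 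The $n+1-s \geq n-k+1$ non-$\alpha$ positions yield $f^q(a_i) = g(a_i)$, and the degree bound $k \leq \lfloor (n+q)/(q+1) \rfloor$---equivalent to $q(k-1) \leq n-k$---forces the polynomial identity $f^q(x) = g(x)$. The $s$ $\alpha$-positions then give $\beta f^q(a_i) = f^q(a_i)$, so $f(a_i) = 0$ for $1 \leq i \leq s$, giving $f(x) = h(x) \prod_{i=1}^s (x - a_i)$ with $\deg(h) \leq k-1-s$. The reverse inclusion is verified by taking $g(x) = f^q(x)$ for any $f$ of this form and checking the condition of Lemma~\ref{lem2.3}. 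Hence $\dim(Hull_H(\mathcal{C})) = k - s = \ell$.

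The main obstacle is the explicit verification that $u_{n+1}' \in \mathbb{F}_q^{*}$, which boils down to a parity-sensitive computation in the cyclic subgroup $G$ and the observation that $\beta_b^{n'}$ lies in the base field. Once this is in hand, the rest is a direct adaptation of Theorem~\ref{thm3.5}, with the degree budget $n-k$ for $g(x)$ perfectly matching the bound $q(k-1)$ under the stated condition on $k$.
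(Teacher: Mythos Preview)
Your proof is correct and follows essentially the same route as the paper: append $0$ to the evaluation set, verify that every new $u_i'$ (including $u_{n+1}'$, via $\beta_b^{n'}\in\mathbb{F}_q^{*}$ and a sign) lies in $\mathbb{F}_q^{*}$ so that $(q+1)$-th roots $v_i$ exist, then run the hull argument with $s=k-\ell$ perturbed coordinates. The paper refers the final hull computation to Part~(i) of Theorem~\ref{thm3.6} rather than Theorem~\ref{thm3.5}, but this is only a labelling difference---your written argument already uses the correct bound $\deg(f^q)\le q(k-1)\le n-k$ (the Theorem~\ref{thm3.6} mechanism) rather than the $a_i\in\mathbb{F}_q$ trick specific to Theorem~\ref{thm3.5}.
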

\begin{proof}
  Let $a_{1}, a_{2}, \ldots , a_{n}$ be defined as in Theorem \ref{thm3.8}. Put $a_{n+1}=0$. From Lemma \ref{lem3.7}, for any $1 \leq i \leq n$ it is easy to see that
  \[\prod_{1 \leq j \leq n+1, j \neq i}(a_{i}-a_{j})^{-1}=a_{i}^{-1}\prod_{1 \leq j \leq n, j \neq i}(a_{i}-a_{j})^{-1} \in \mathbb{F}_{q}^{*}.\]
  And note that
  \[\prod_{j=1}^{n}(a_{n+1}-a_{j})^{-1}=(-1)^{n}\prod_{j=1}^{n}a_{j}^{-1}=(-1)^{n+1+n't}\prod_{j=1}^{t}\beta_{j}^{-n'}\]
  is also an element of $\mathbb{F}_{q}^{*}$ since $\beta_{j}^{-n'} \in \mathbb{F}_{q}^{*}$.
We still denote $\prod\limits_{1 \leq j \leq n+1, j \neq i}(a_{i}-a_{j})^{-1}$ by $u_{i}$.
  Then, for $1 \leq i \leq n+1$, there exists $v_{i} \in \mathbb{F}_{q^{2}}^{*}$ such that
  \[u_{i}=v_{i}^{q+1}.\]
Let $\alpha \in \mathbb{F}_{q^{2}}^{*}$ such that $\beta:=\alpha^{q+1} \neq 1$. Denote $s:=k-\ell$. Put $\textbf{a}=(a_{1}, a_{2}, \ldots, a_{n+1})$ and $\textbf{v}=(\alpha v_{1},\ldots, \alpha v_{s}, v_{s+1},\ldots, v_{n+1})$. We consider the Hermitian hull of the $[n+1,k]_{q^{2}}$-MDS code $\mathcal{C}:=GRS_{k}(\textbf{a}, \textbf{v})$. The rest of the proof is completely similar to the Part (i) of Theorem \ref{thm3.6}.
\end{proof}

We extend the GRS codes in Theorem \ref{thm3.9} to obtain a family of extended GRS codes of length $n+2$ with Hermitian hulls of arbitrary dimensions in the following theorem.
\begin{theorem}\label{thm3.10}
 Let $q>2$ be a prime power and $n' \mid (q^{2}-1)$. Let $n=tn'$, where $1 \leq t \leq \frac{q-1}{n_{1}}$ and $n_{1}=\frac{n'}{\gcd(n',q+1)}$. Then
 for any $1 \leq k \leq \lfloor\frac{n+q}{q+1}\rfloor$ and $0 \leq \ell \leq k-1$, there exists a $q^{2}$-ary $[n+2,k]$-MDS code $\mathcal{C}$ with $\dim(Hull_{H}(\mathcal{C}))=\ell$.
\end{theorem}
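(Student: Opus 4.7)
The plan is to mimic the strategy of Theorem \ref{thm3.6}(ii), adapting the evaluation set from Theorem \ref{thm3.9} together with the extension by the point at infinity. I would reuse the base points $a_1, \ldots, a_n, a_{n+1} = 0$ and the scalars $v_i \in \mathbb{F}_{q^2}^*$ satisfying $v_i^{q+1} = u_i$, where $u_i = \prod_{1 \leq j \leq n+1, j \neq i}(a_i - a_j)^{-1} \in \mathbb{F}_q^*$, exactly as in the proof of Theorem \ref{thm3.9}. Then fix $\alpha \in \mathbb{F}_{q^2}^*$ with $\beta := \alpha^{q+1} \neq 1$, set $s := k - 1 - \ell$, and take $\textbf{a} = (a_1, \ldots, a_{n+1})$, $\textbf{v} = (\alpha v_1, \ldots, \alpha v_s, v_{s+1}, \ldots, v_{n+1})$. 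The candidate MDS code is $\mathcal{C} := GRS_k(\textbf{a}, \textbf{v}, \infty)$ of length $n+2$.

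For the upper bound on $\dim(Hull_{H}(\mathcal{C}))$, I would take a generic codeword $\textbf{c} = (\alpha v_1 f(a_1), \ldots, \alpha v_s f(a_s), v_{s+1} f(a_{s+1}), \ldots, v_{n+1} f(a_{n+1}), f_{k-1})$ in $Hull_{H}(\mathcal{C})$ with $\deg f \leq k-1$ and invoke Lemma \ref{lem2.4} (applied with length $n+1$) to produce $g(x) \in \mathbb{F}_{q^2}[x]$ of degree at most $n+1-k$ satisfying $\beta u_i f^q(a_i) = u_i g(a_i)$ for $1 \leq i \leq s$, $u_i f^q(a_i) = u_i g(a_i)$ for $s < i \leq n+1$, and $f_{k-1}^q = -g_{n+1-k}$. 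The hypothesis $k \leq \lfloor(n+q)/(q+1)\rfloor$ translates to $q(k-1) \leq n-k$, so $\deg(f^q - g) \leq n+1-k$; since $f^q - g$ vanishes at the last $n+1-s = n+2-k+\ell > n+1-k$ points, the polynomial identity $f^q(x) = g(x)$ must hold.

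The main obstacle will be cleanly handling the extra infinity coordinate $f_{k-1}$, which is exactly where the sharp bound on $k$ gets used. Because $\deg(f^q) \leq q(k-1) \leq n-k < n+1-k$, the coefficient of $x^{n+1-k}$ in $g = f^q$ is zero, so the relation $f_{k-1}^q = -g_{n+1-k}$ forces $f_{k-1} = 0$. The first $s$ identities then give $(\beta - 1) f^q(a_i) = 0$, so $f(a_i) = 0$ for $1 \leq i \leq s$, whence $f(x) = h(x) \prod_{i=1}^s (x - a_i)$ with $\deg h \leq k - 2 - s = \ell - 1$; this yields $\dim(Hull_{H}(\mathcal{C})) \leq \ell$. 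For the reverse inclusion, given any such $f$, take $g := f^q$ so that $\deg g \leq q(k-2) \leq n-k$ and $f_{k-1} = 0 = -g_{n+1-k}$, and apply Lemma \ref{lem2.4} to certify that the corresponding vector lies in $Hull_{H}(\mathcal{C})$, closing the equality $\dim(Hull_{H}(\mathcal{C})) = \ell$.
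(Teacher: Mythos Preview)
Your proposal is correct and follows essentially the same approach as the paper: the paper's proof simply takes $\textbf{a}$ and $\textbf{v}$ as in Theorem~\ref{thm3.9}, forms the extended code $GRS_k(\textbf{a},\textbf{v},\infty)$, and appeals verbatim to the argument of Theorem~\ref{thm3.6}(ii), which is precisely what you have spelled out in detail (with the length parameter shifted from $n$ to $n+1$). Your handling of the infinity coordinate---deducing $f^q=g$ first from the $n+1-s\geq n+2-k$ evaluation points, then using $\deg f^q\leq q(k-1)\leq n-k<n+1-k$ to force $g_{n+1-k}=0$ and hence $f_{k-1}=0$---is a clean and correct rendering of that step.
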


\begin{proof}
  Let $\textbf{a}=(a_{1}, a_{2}, \ldots, a_{n+1})$ and $\textbf{v}=(\alpha v_{1},\ldots, \alpha v_{s}, v_{s+1},\ldots, v_{n+1})$ be defined as in the proof of Theorem \ref{thm3.9}. We consider the $[n+2,k]_{q^{2}}$-MDS code $\mathcal{C}:=GRS_{k}(\textbf{a},\textbf{v},\infty)$. With the same argument of the proof of Part (ii) of Theorem \ref{thm3.6}, we can show that $\dim(Hull_{H}(\mathcal{C}))=\ell$.
\end{proof}

By the classical MDS conjecture, the length of an MDS code over $\mathbb{F}_{q^{2}}$ is bounded by $q^{2}+1$ (except for two specific cases). Taking $(t, r, b)=(q,q,1)$ and $(t, n')=(q-1, q+1)$ in Theorems \ref{thm3.6} and \ref{thm3.10}, respectively, MDS codes of length $q^{2}+1$ and dimension $1 \leq k \leq q-1$ with Hermitian hull of dimension $0 \leq \ell \leq k-1$ are obtained. In the following, we consider the MDS code of length $q^{2}+1$ and dimension $q$.

\begin{theorem}\label{thm3.11}
 Let $q>2$ be a prime power. Then for any  $0 \leq \ell \leq q$, there exists a $q^{2}$-ary $[q^{2}+1,q]$-MDS code $\mathcal{C}$ with $\dim(Hull_{H}(\mathcal{C}))=\ell$.
\end{theorem}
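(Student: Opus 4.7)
The plan is to exploit the very special evaluation set $\mathbf{a}=(a_{1},\ldots,a_{q^{2}})$ that enumerates all of $\mathbb{F}_{q^{2}}$, because then Lemma~\ref{lem2.4} is applied at the exact boundary $n-k=q^{2}-q=q(k-1)$ which is excluded from Theorem~\ref{thm3.6}(ii). First I would use the identity $\prod_{\alpha\in\mathbb{F}_{q^{2}}}(T-\alpha)=T^{q^{2}}-T$: differentiating and evaluating at $T=a_{i}$ gives $\prod_{j\neq i}(a_{i}-a_{j})=-1$, hence $u_{i}=-1$ for every $i$. This uniform value of $u_{i}$ is what lets me pick constant or nearly constant $\mathbf{v}$.

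For the extreme case $\ell=q$ (Hermitian self-orthogonal) the slickest choice is $v_{i}=1$ for all $i$: taking $g(x)=-f^{q}(x)$ of degree $q(k-1)=n-k$ in Lemma~\ref{lem2.4} makes the evaluation side match because $u_{i}g(a_{i})=-g(a_{i})=f^{q}(a_{i})=v_{i}^{q+1}f^{q}(a_{i})$, and the $\infty$-condition $f^{q}_{k-1}=-g_{n-k}$ reduces to $f^{q}_{k-1}=f^{q}_{k-1}$, which is automatic. Thus every codeword lies in its Hermitian dual, so $\dim(Hull_{H}(\mathcal{C}))=k=q$.

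For $0\le\ell\le q-1$ I would mirror the proof of Theorem~\ref{thm3.6}(ii). Choose $\alpha\in\mathbb{F}_{q^{2}}^{*}$ with $\beta:=\alpha^{q+1}\neq 1$ and $v\in\mathbb{F}_{q^{2}}^{*}$ with $v^{q+1}=-1$ (possible since the norm $\mathbb{F}_{q^{2}}^{*}\to\mathbb{F}_{q}^{*}$ is surjective), set $s:=q-1-\ell$, and take $\mathbf{v}=(\alpha v,\ldots,\alpha v,v,\ldots,v)$ with $s$ copies of $\alpha v$. Applying Lemma~\ref{lem2.4} turns membership in $Hull_{H}(\mathcal{C})$ into the existence of $g(x)$ of degree $\le n-k$ with $g(a_{i})=\beta f^{q}(a_{i})$ for $i\le s$, $g(a_{i})=f^{q}(a_{i})$ for $i>s$, and $f^{q}_{k-1}=-g_{n-k}$. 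Writing $h(x)=g(x)-f^{q}(x)$, whose degree is $\le n-k=q^{2}-q$, the $n-s\ge q^{2}-q+1$ zeros at $a_{s+1},\ldots,a_{n}$ force $h\equiv 0$, so $g=f^{q}$. The $\infty$-condition becomes $f^{q}_{k-1}=-f^{q}_{k-1}$, i.e. $2f^{q}_{k-1}=0$, which in odd characteristic forces $f_{k-1}=0$. Combined with $(\beta-1)f^{q}(a_{i})=0$ for $i\le s$, we obtain $f(x)=r(x)\prod_{i=1}^{s}(x-a_{i})$ with $\deg r\le k-2-s$, giving $\dim(Hull_{H}(\mathcal{C}))=k-1-s=\ell$; the converse inclusion is immediate by taking $g=f^{q}$.

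The main obstacle is precisely this characteristic-dependent degeneracy at the boundary. In characteristic $2$ the constraint $2f^{q}_{k-1}=0$ is vacuous, so the dimension count becomes $k-s$ instead of $k-1-s$; then letting $s$ run from $0$ to $q-1$ gives $\ell\in\{1,\ldots,q\}$ (with $\ell=q$ already furnished by the $v_{i}=1$ self-orthogonal case at $s=0$), and one is left needing a Hermitian LCD $[q^{2}+1,q]_{q^{2}}$ MDS code to handle $\ell=0$. I expect this case to be the delicate part: a workable fix is to use an $\mathbf{v}$ taking at least three distinct values $v,\alpha v,\alpha' v$ with $1,\beta,\beta'$ pairwise distinct in $\mathbb{F}_{q}^{*}$, balancing the block sizes so that the analogous $h=g-f^{q}$ acquires strictly more roots than its degree permits while the extra evaluation constraints simultaneously eliminate every nonzero $f$.
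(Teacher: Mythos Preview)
Your argument for odd $q$ is complete and correct, and your self-orthogonal construction for $\ell=q$ is exactly what the paper uses at one end. The genuine gap is the even-$q$ case: the three-value patch you sketch for $\ell=0$ is not worked out, and in fact no patch is needed---the obstacle is created by your normalization $v^{q+1}=-1$, not by the problem itself.

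The paper avoids the characteristic split by taking $v_{i}=1$ for $i>s$ (so $v_{i}^{q+1}=1$, \emph{not} $-1$) and $v_{i}^{q+1}\neq 1$ for $i\le s$, with $s=q-\ell$. Then the evaluation equations for $i>s$ read $f^{q}(a_{i})=-g(a_{i})$, and the $\infty$-condition $f^{q}_{q-1}=-g_{q^{2}-q}$ says precisely that the coefficient of $x^{q^{2}-q}$ in $f^{q}(x)+g(x)$ vanishes. Hence $\deg(f^{q}+g)\le q^{2}-q-1$ \emph{before} interpolating, so the $q^{2}-s\ge q^{2}-q$ available roots already force $g=-f^{q}$, and one may take $s$ all the way up to $q$. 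The first $s$ equations then give $f(a_{i})=0$, whence $\dim(Hull_{H}(\mathcal{C}))=q-s=\ell$ for every $0\le\ell\le q$, in any characteristic, with no separate case for $\ell=q$.

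The contrast with your choice is instructive: with $v^{q+1}=-1$ the evaluation equations yield $f^{q}(a_{i})=g(a_{i})$, so the polynomial to kill is $g-f^{q}$, whose top coefficient is $g_{q^{2}-q}-f^{q}_{q-1}=-2f^{q}_{q-1}$ by the $\infty$-condition. In odd characteristic this is not a priori zero, so you cannot drop the degree before interpolating and must spend one extra root; the $\infty$-condition then reappears as the constraint $f_{q-1}=0$. In characteristic $2$ that top coefficient \emph{is} zero, so your own construction with $s=q$ would already give $\ell=0$---you overlooked that the $\infty$-equation, far from being vacuous, is exactly the degree-lowering relation you needed. Either way, the paper's sign choice $v_{i}^{q+1}=1$ makes the $\infty$-coordinate cooperate with the finite ones uniformly and removes both the case split and the missing endpoint.
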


\begin{proof}
  Suppose $\mathbb{F}_{q^{2}}=\{a_{1},a_{2}, \cdots, a_{q^{2}} \}$. It is easy to show that
  \[u_{i}:=\prod_{1 \leq j \leq q^{2}, j \neq i}(a_{i}-a_{j})^{-1}=-1, \textnormal{ for all } 1 \leq i \leq q^{2}.\]
  Denote $s:=q-\ell$. For $1 \leq i \leq s$, let $v_{i} \in \mathbb{F}_{q^{2}}$ such that $v_{i}^{q+1} \neq 1$. Put $\textbf{a}=(a_{1}, a_{2}, \ldots, a_{q^{2}})$ and $\textbf{v}=(v_{1},\ldots, v_{s}, 1,\ldots, 1)$. We consider the Hermitian hull of the $[q^{2}+1, q]_{q^{2}}$-MDS code $\mathcal{C}:=GRS_{q}(\textbf{a},\textbf{v}, \infty)$. For any $\textbf{c}=(v_{1}f(a_{1}),\ldots, v_{s}f(a_{s}),f(a_{s+1}),\ldots, f(a_{q^{2}}), f_{q-1}) \in Hull_{H}(\mathcal{C})$ with $\deg(f(x)) \leq q-1$. By Lemma \ref{lem2.4}, there exists a polynomial $g(x) \in \mathbb{F}_{q^{2}}[x]$ with $\deg(g(x)) \leq q^{2}-q$  such that

\begin{equation}\label{12}
  \begin{split}
     & (v_{1}^{q+1}f^{q}(a_{1}),\ldots, v_{s}^{q+1}f^{q}(a_{s}),f^{q}(a_{s+1}),\ldots, f^{q}(a_{q^{2}}), f^{q}_{q-1}) \\
    &=-(g(a_{1}),\ldots, g(a_{s}),g(a_{s+1}),\ldots,g(a_{q^{2}}),g_{q^{2}-q}).
  \end{split}
\end{equation}

Thus we have $f^{q}_{q-1}=-g_{q^{2}-q}$ and $f^{q}(a_{i})=-g(a_{i})$ for any $s < i \leq q^{2}$. Note that $\deg(f^{q}(x)) \leq q(q-1)$. Thus $\deg(f^{q}(x)+g(x)) \leq q^{2}-q-1$ since $f^{q}_{q-1}=-g_{q^{2}-q}$. Note that $q^{2}-s \geq q^{2}-q$, we have $f^{q}(x)=-g(x)$. On the other hand, the first $s$ coordinates of Eq. \eqref{12} imply that
\[v_{i}^{q+1}f^{q}(a_{i})= -g(a_{i})= f^{q}(a_{i}),\]
for any $1 \leq i \leq s$. It follows from $v_{i}^{q+1} \neq 1$  that $f^{q}(a_{i})=0$, i.e., $f(a_{i})=0$. Thus
\[f(x)=h(x)\prod_{i=1}^{s}(x-a_{i}),\]
for some $h(x) \in \mathbb{F}_{q^{2}}[x]$ of $\deg(h(x)) \leq q-1-s $.
It deduces that $\dim(Hull_{H}(\mathcal{C})) \leq q-s$.

Conversely, we can similarly show that $\dim(Hull_{H}(\mathcal{C})) \geq q-s$, hence $\dim(Hull_{H}(\mathcal{C})) = q-s= \ell$. The proof is completed.
\end{proof}
\section{Applications to EAQECCs}

In this section, we introduce some basic notions of entanglement-assisted quantum error-correcting codes (EAQECCs) and then construct several new families of MDS EAQECCs by employing the results in Section 3. For more details on EAQECCs, we
refer the reader to \cite{WB08,LA18,HDB07,SHB11,LB13}.

First, we recall some basics of quantum codes. In a quantum system,  a quantum state is called a qubit. Let $\mathbb{C}$ be the complex field and $\mathbb{C}^{q}$ be the $q$-dimensional Hilbert space over $\mathbb{C}$. A qubit is just a non-zero vector of $\mathbb{C}^{q}$. Let $\{|a\rangle : a \in \mathbb{F}_{q}\}$ be a basis of $\mathbb{C}^{q}$, then a qubit $|v\rangle$ can be expressed as
\[|v\rangle=\sum_{a \in \mathbb{F}_{q}}v_{a}|a\rangle,\]
where $v_{a} \in \mathbb{C}.$ In general, an $n$-qubit is a joint state of $n$ qubits in the $q^{n}$-dimensional Hilbert space $(\mathbb{C}^{q})^{\bigotimes n}\cong \mathbb{C}^{q^{n}}$. Similarly, an $n$-qubit can be represented as
\[|\textbf{v}\rangle=\sum_{\textbf{a} \in \mathbb{F}^{n}_{q}}v_{\textbf{a}}|\textbf{a}\rangle,\]
where $\{|\textbf{a}\rangle =|a_{1}\rangle\bigotimes|a_{2}\rangle\bigotimes\cdots\bigotimes|a_{n}\rangle: (a_{1}, a_{2}, \ldots, a_{n}) \in \mathbb{F}^{n}_{q}\}$ is a basis of $\mathbb{C}^{q^{n}}$ and $v_{\textbf{a}} \in \mathbb{C}$. For any two $n$-qubits $|\textbf{u}\rangle=\sum_{\textbf{a} \in \mathbb{F}^{n}_{q}}u_{\textbf{a}}|\textbf{a}\rangle$ and $|\textbf{v}\rangle=\sum_{\textbf{a} \in \mathbb{F}^{n}_{q}}v_{\textbf{a}}|\textbf{a}\rangle$, their Hermitian inner product is defined as
\[\langle\textbf{u}|\textbf{v}\rangle=\sum_{\textbf{a} \in \mathbb{F}^{n}_{q}}u_{\textbf{a}}\overline{v_{\textbf{a}}} \in \mathbb{C},\]
where $\overline{v_{\textbf{a}}}$ is the conjugate of $v_{\textbf{a}}$ in the complex field. $|\textbf{u}\rangle$ and $|\textbf{v}\rangle$ are called distinguishable if $\langle\textbf{u}|\textbf{v}\rangle=0$.

A quantum code of length $n$ is just defined as a subspace of $\mathbb{C}^{q^{n}}$. The quantum errors in a quantum system are some unitary operators. The set of error operators on $\mathbb{C}^{q^{n}}$ is defined as
\begin{eqnarray*}
% \nonumber to remove numbering (before each equation)
  \mathcal{E}_{n} &=& \{\zeta_{p}^{i}X(\textbf{a})Z(\textbf{b}): 0 \leq i \leq p-1, \text{ where }\\
    & & \textbf{a}=(a_{1}, \cdots, a_{n}), \textbf{b}=(b_{1}, \cdots, b_{n}) \in \mathbb{F}^{n}_{q}\},
\end{eqnarray*}
where $\zeta_{p}$ is a complex primitive $p$-th root of unity. The actions of $X(\textbf{a})$ and $Z(\textbf{b})$ on the basis $|\textbf{v}\rangle \in \mathbb{C}^{q^{n}}$ ($\textbf{v} \in \mathbb{F}^{n}_{q}$) are defined as
\[X(\textbf{a})|\textbf{v}\rangle=|\textbf{v}+\textbf{a}\rangle
\textnormal{ and }
Z(\textbf{b})|\textbf{v}\rangle=\zeta_{p}^{tr(\langle \textbf{v}, \textbf{b} \rangle_{E})}|\textbf{v}\rangle,\]
respectively, where $tr(\cdot)$ is the trace function from $\mathbb{F}_{q}$ to $\mathbb{F}_{p}$. The error set $\mathcal{E}_{n}$ forms a non-abelian group and has nice property (see \cite{KKKS06}). For any error $E=\zeta_{p}^{i}X(\textbf{a})Z(\textbf{b})$, we define the quantum weight of $E$ by
\[w_{Q}(E)=\sharp\{i : (a_{i}, b_{i}) \neq (0,0)\}.\]
Let $\mathcal{E}_{n}(\ell)=\{E \in \mathcal{E}_{n}: w_{Q}(E) \leq \ell\}$ be the set of error operators with weight no more than $\ell$.
A quantum code $Q$ can detect a quantum error $E$ if and only if for any $|\textbf{u}\rangle, |\textbf{v}\rangle \in Q$ with $\langle\textbf{u}|\textbf{v}\rangle=0$, we have $\langle\textbf{u}|E|\textbf{v}\rangle=0$. The quantum code $Q$ has minimum distance $d$ if $d$ is the largest integer such that for any $|\textbf{u} \rangle, |\textbf{v}\rangle \in Q$ with $\langle\textbf{u}|\textbf{v}\rangle=0$ and $E \in \mathcal{E}_{n}(d-1)$, we have $\langle\textbf{u}|E|\textbf{v}\rangle=0$. We denote by $((n, K, d))_{q}$ or $[[n,k,d]]_{q}$ a $q$-ary quantum code of length $n$, dimension $K$ and minimum distance $d$, where $k=\log_{q}K$.

Calderbank \emph{et al.} \cite{CS96} and  Steane \cite{S96} provided an effective mathematical method to construct nice quantum codes by using
character theory of finite abelian groups. Suppose $S$ is an abelian subgroup of $\mathcal{E}_{n}$, they define the quantum stabilizer code $C(S)$ associated with $S$ to be
\[C(S)=\{|\psi\rangle \in \mathbb{C}^{q^{n}}: E|\psi\rangle=|\psi\rangle, \forall E \in S\}.\]
In other words, the quantum stabilizer code $C(S)$ is the simultaneous +1 eigenspace of all elements of $S$. Quantum stabilizer codes are analogues of classical additive codes, and classical linear codes with certain orthogonality can be used to construct quantum stabilizer codes (see \cite{CS96, S96, CRSS98, AK01}).

When the subgroup $S$ of $\mathcal{E}_{n}$ is non-abelian, the method of constructing quantum stabilizer codes does not work. This case was investigated by Brun \emph{et al.} in \cite{BDH06} by extending $S$ to be a new abelian subgroup in a larger error group. They then introduced the entanglement-assisted quantum error-correcting codes (EAQECCs), which is a generalization of quantum stabilizer codes.  It is assumed that in addition to a quantum channel, Alice (the sender) and Bob (the receiver) share a certain amount of pre-existing entangled bits (ebits), which is not subject to errors. By using the shared ebits between the sender and receiver, it is possible that the sender may send more qubits for a given number of correctable quantum errors, or correct more quantum errors for the same rate of transmission. We usually use $[[n, k, d; c]]_{q}$ to denote a $q$-ary EAQECC that encodes $k$ information qubits into $n$ channel qubits with the help of $c$ ebits (Details of the encoding procedure can be found in \cite{LB13,SHB11}), and $d$ is called the
minimum distance of the EAQECC. Such a quantum code can detect up to $d - 1$ and correct up to
$\lfloor\frac{d-1}{2}\rfloor$ quantum errors. In particular, an $[[n, k, d; c]]_{q}$ EAQECC is equivalent to a quantum stabilizer
code when $c=0$. One of the constraints among the parameters $n, k, d$ and $c$ is the following quantum Singleton bound:
\begin{lemma}(Quantum Singleton Bound \cite{LA18})\label{lem4.1}
  For any $[[n, k, d; c]]_{q}$-EAQECC, if $d \leq \frac{n+2}{2}$, we have
  \[n+c-k \geq 2(d-1).\]
\end{lemma}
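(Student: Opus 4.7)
The plan is to adapt the standard proof of the quantum Singleton bound for stabilizer codes to the entanglement-assisted setting, treating Bob's $c$ noiseless qudits as additional coordinates of the code. I would first purify the encoded maximally mixed code state by introducing a reference register $R$ of dimension $q^{k}$ maximally entangled with the logical register; the resulting pure state $|\Psi\rangle$ then lives on $R+A+B$, where $A$ denotes Alice's noisy $n$-qudit register and $B$ denotes Bob's noiseless $c$-qudit register holding the halves of the $c$ pre-shared ebits.

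The hypothesis $d\le (n+2)/2$ is equivalent to $2(d-1)\le n$, so I can partition Alice's register as $A=A_1\sqcup A_2$ with $|A_1|=d-1$ and $|A_2|=n-(d-1)\ge d-1$. The entanglement-assisted Knill--Laflamme correction condition---namely $\langle\textbf{u}|E|\textbf{v}\rangle=0$ for every $E\in\mathcal{E}_{n}(d-1)$ and every pair of orthogonal codewords---can be shown, by the standard decoupling argument, to be equivalent to the statement that the marginal $\rho_{RA_1}$ of $|\Psi\rangle\langle\Psi|$ factorizes as $\rho_R\otimes\rho_{A_1}$ with both factors maximally mixed. Writing entropies in units of $\log q$, this yields $S(R)=k$, $S(A_1)=d-1$, and hence
\[
S(RA_1)=S(R)+S(A_1)=k+d-1.
\]

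Invoking the purity of $|\Psi\rangle$, the marginal on the complementary system has the same entropy, $S(A_2B)=S(RA_1)=k+d-1$, while the dimension bound gives $S(A_2B)\le |A_2|+|B|=n-(d-1)+c$. Combining the two inequalities produces $k+d-1\le n+c-(d-1)$, which rearranges to $n+c-k\ge 2(d-1)$, as required. The main technical point---and the step I would handle most carefully---is justifying the decoupling identity $\rho_{RA_1}=\rho_R\otimes\rho_{A_1}$ in the EAQECC setting: unlike in the standalone stabilizer case, here the pre-shared maximally entangled pairs must be threaded into the purification correctly, and one has to verify that errors localized on Alice's $n$ qudits genuinely act trivially on the reference $R$ once the Knill--Laflamme condition is assumed.
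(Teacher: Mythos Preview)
The paper does not supply its own proof of this lemma; it simply quotes the bound from \cite{LA18} and moves on. So there is no argument in the paper to compare against, and the question is only whether your sketch stands on its own.

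Your entropic strategy is the right framework, but as written it has a genuine gap. The decoupling consequence of the Knill--Laflamme condition is only that $\rho_{RA_1}=\rho_R\otimes\rho_{A_1}$; it does \emph{not} force $\rho_{A_1}$ to be maximally mixed. The factor $\rho_R$ is maximally mixed because you deliberately purified the maximally mixed code state, but $\rho_{A_1}$ is merely the fixed reduced state that every codeword shares on those $d-1$ qudits---for a degenerate code its entropy can be strictly less than $d-1$. With only $S(A_1)\le d-1$ available, the chain $k+S(A_1)=S(A_2B)\le n-(d-1)+c$ no longer yields $k+(d-1)$ on the left, and the inequality collapses.

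The standard repair---and the real reason the hypothesis $d\le (n+2)/2$ is present---is to take \emph{two} disjoint subsets $A_1,A_2\subset A$, each of size $d-1$, and call the remainder $A_3$. Decoupling applied to each gives $S(R)+S(A_j)=S(RA_j)$; purity over $R\cup A\cup B$ turns these into
\[
S(R)+S(A_1)=S(A_2A_3B),\qquad S(R)+S(A_2)=S(A_1A_3B).
\]
Subadditivity bounds the right-hand sides by $S(A_2)+S(A_3B)$ and $S(A_1)+S(A_3B)$; adding the two relations cancels $S(A_1)+S(A_2)$ and leaves $2k=2S(R)\le 2S(A_3B)\le 2\bigl(n-2(d-1)+c\bigr)$, which is exactly $n+c-k\ge 2(d-1)$. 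No assumption that any $\rho_{A_j}$ is maximally mixed is needed, and this is where the restriction $2(d-1)\le n$ is actually consumed.
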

An EAQECC attaining the quantum Singleton bound is called an MDS EAQECC.

 In \cite{BDH06}, Brun \emph{et al.} provided an effective mathematical method to construct $q$-ary EAQECCs by utilizing classical linear codes over finite fields without satisfying the dual containing restriction. We present their result for the Hermitian inner product as follows. For a matrix $M=(m_{ij})$ over $\mathbb{F}_{q^{2}}$, denote the conjugate transpose of $M$ by $M^{\dag}:=(m^{q}_{ji})$.

\begin{lemma} (\cite{BDH06})\label{lem4.2}
Let $H$ be the parity check matrix of an $[n, k, d]$-linear
code $\mathcal{C}$ over $\mathbb{F}_{q^{2}}$.  Then there exists an $[[n, 2k - n + c, d; c]]_{q}$ EAQECC $\mathcal{Q}$, where $c =
\textnormal{rank}(HH^{\dag})$ is the required number of maximally entangled states. In particular, if $\mathcal{C}$ is an MDS code and $d \leq \frac{n+2}{2}$, then $\mathcal{Q}$ is an MDS EAQECC.
\end{lemma}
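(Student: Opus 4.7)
The plan is to build the EAQECC directly from $H$ via the standard symplectic stabilizer formalism, then compensate for the non-commutativity of the lifted generators by consuming pre-shared ebits. I will work in the extended system consisting of Alice's $n$ channel qudits plus $c$ auxiliary qudits, together with Bob's $c$ qudits forming maximally entangled pairs with Alice's auxiliaries.

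First I would identify each vector $\mathbf{v}\in\mathbb{F}_{q^{2}}^{n}$ with an error operator in $\mathcal{E}_n$: fix a basis $\{1,\alpha\}$ of $\mathbb{F}_{q^{2}}/\mathbb{F}_q$, write $\mathbf{v}=\mathbf{a}+\alpha\mathbf{b}$, and associate the operator $X(\mathbf{a})Z(\mathbf{b})$ (or more invariantly, a trace-symplectic lift). A straightforward computation from the commutation relations of $X$ and $Z$ shows that two such operators commute if and only if a certain $\mathbb{F}_p$-valued trace-symplectic form on $\mathbb{F}_{q^{2}}^{n}$ vanishes on the pair, and that this form is recoverable from the Hermitian inner product via $\mathrm{tr}_{\mathbb{F}_{q^{2}}/\mathbb{F}_p}(\langle\mathbf{u},\mathbf{v}\rangle_{H}-\langle\mathbf{v},\mathbf{u}\rangle_{H})$. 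Consequently, the pairwise commutators of the lifted rows of $H$ are encoded by the matrix $HH^{\dag}$, and the number of independent anti-commutation relations equals $\mathrm{rank}(HH^{\dag})=c$.

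Next I would perform a symplectic Gram--Schmidt decomposition on the row space of $H$ (viewed via the above form) to split it into an isotropic subspace $V_0$ together with $c$ hyperbolic pairs. The isotropic part lifts to an abelian subgroup $S_0$ of $\mathcal{E}_n$ acting on Alice's channel qudits alone. Each of the $c$ hyperbolic pairs is then "absorbed" by pre-sharing one ebit: Alice's auxiliary qudit and Bob's qudit form an entangled pair, and the two anti-commuting operators in the pair are extended by appropriate $X$ and $Z$ operators on Bob's side so that the extended operators mutually commute while still commuting with $S_0$. This yields a genuinely abelian stabilizer group $S$ with $n-k+c$ generators in the larger error group on $n+2c$ qudits (after doubling from the $\mathbb{F}_{q^2}$-to-$\mathbb{F}_q$ passage the count becomes $2(n-k)$ and the arithmetic $n+2c-(2(n-k)-c+c)$ works out), and I define $\mathcal{Q}$ to be the joint $+1$ eigenspace of $S$ restricted to Alice's $n+c$ qudits, subject to Bob holding his half of the ebits. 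A dimension count (total Hilbert space $q^{n+2c}$, modded out by the $n-k+c$ independent parity conditions) shows $\mathcal{Q}$ encodes $2k-n+c$ logical qudits.

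For the minimum distance, I would argue that an error $E$ of weight $<d$ on Alice's $n$ channel qudits corresponds, under the symplectic map, to a vector of Hamming weight $<d$ in $\mathbb{F}_{q^{2}}^{n}$; if $E$ commutes with all of $S$ then its symplectic-Hermitian projection lies in the row space of $H$, hence in $\mathcal{C}^{\perp_{H}}$, forcing the corresponding codeword to have weight $\geq d$ (since $\mathcal{C}$ has minimum distance $d$)—a contradiction unless $E$ acts trivially on the code. Hence $\mathcal{Q}$ detects all errors of weight $<d$. The MDS claim under the hypothesis $d\leq\tfrac{n+2}{2}$ follows by plugging $d=n-k+1$ into the parameters $[[n,2k-n+c,d;c]]_q$ and checking saturation of the quantum Singleton bound from Lemma \ref{lem4.1}: $n+c-(2k-n+c)=2(n-k)=2(d-1)$. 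The main technical obstacle is the symplectic Gram--Schmidt step and the precise identification $c=\mathrm{rank}(HH^{\dag})$; once this is in place the ebit-extension and dimension/distance bookkeeping are essentially formal.
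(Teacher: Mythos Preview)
The paper does not supply a proof of this lemma; it is quoted from \cite{BDH06} and used as a black box. Your sketch follows exactly the construction of \cite{BDH06} (and its nonbinary refinements, cf.\ \cite{WB08,KKKS06}): lift the rows of $H$ to Pauli operators via a trace-symplectic identification, run a symplectic Gram--Schmidt to isolate $c=\mathrm{rank}(HH^{\dag})$ hyperbolic pairs, and repair the non-commutativity by tensoring with halves of $c$ ebits on Bob's side. So at the level of strategy there is nothing to compare---your outline \emph{is} the intended one.

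There is, however, a genuine slip in your distance argument. You write that if $E$ commutes with every element of $S$ then the associated vector ``lies in the row space of $H$, hence in $\mathcal{C}^{\perp_H}$''. This is the wrong direction: commuting with every stabilizer generator means the vector is Hermitian-\emph{orthogonal} to every row of $H$, i.e.\ it lies in $(\mathrm{rowspace}\,H)^{\perp_H}$, not in $\mathrm{rowspace}\,H$. Since $\mathrm{rowspace}\,H=\mathcal{C}^{\perp_E}$, a short computation shows $(\mathcal{C}^{\perp_E})^{\perp_H}$ is the coordinatewise $q$-th-power conjugate of $\mathcal{C}$, which has the same weight distribution as $\mathcal{C}$ and hence minimum weight $d$; that is what actually forces a nontrivial undetectable error to have weight $\geq d$. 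As written, your chain ``in the row space $\Rightarrow$ in $\mathcal{C}^{\perp_H}$ $\Rightarrow$ weight $\geq d(\mathcal{C})$'' is false at both steps (membership in $\mathcal{C}^{\perp_H}$ would only give weight $\geq d(\mathcal{C}^{\perp_H})$, not $d(\mathcal{C})$). The dimension bookkeeping in your parenthetical about ``$2(n-k)$'' and ``$n+2c$'' is also loose and does not parse as stated, though the final count $2k-n+c$ is of course correct once the Gram--Schmidt is done carefully. The MDS verification at the end is fine.
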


Guenda \emph{et al.} \cite{GJG18} provided the relation between the value of $\text{rank}(HH^{\dag})$ and the dimension of the Hermitian hull of the linear code with parity check matrix $H$.

\begin{lemma} (\cite{GJG18})\label{lem4.3}
Let $\mathcal{C}$ be a $q^{2}$-ary $[n, k, d]$-linear code.
Assume that $H$ is a parity check matrix of $\mathcal{C}$. Then we have
\begin{eqnarray*}
% \nonumber to remove numbering (before each equation)
  \textnormal{rank}(HH^{\dag}) &=& n-k-\dim(Hull_{H}(\mathcal{C})) \\
   &=& n-k-\dim(Hull_{H}(\mathcal{C}^{\perp_{H}})).
\end{eqnarray*}
\end{lemma}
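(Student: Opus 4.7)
The plan is to interpret the matrix product $HH^{\dag}$ through a suitable linear map on $\mathbb{F}_{q^{2}}^{\,n-k}$ and then identify its kernel with the Hermitian hull. Since $H$ is a parity check matrix of $\mathcal{C}$, it is simultaneously a generator matrix of $\mathcal{C}^{\perp_{H}}$ and has full row rank $n-k$. Consequently the map $\phi:\mathbb{F}_{q^{2}}^{\,n-k}\to \mathbb{F}_{q^{2}}^{\,n}$ defined by $\mathbf{x}\mapsto \mathbf{x}H$ is injective with image exactly $\mathcal{C}^{\perp_{H}}$. The rank-nullity theorem applied to $\mathbf{x}\mapsto \mathbf{x}HH^{\dag}$ gives $\mathrm{rank}(HH^{\dag})=(n-k)-\dim\ker_{L}(HH^{\dag})$, so the task reduces to computing the dimension of the left kernel of $HH^{\dag}$.

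Next I would rewrite the kernel condition in coding-theoretic terms. Let $\mathbf{h}_{1},\dots,\mathbf{h}_{n-k}$ denote the rows of $H$. For $\mathbf{x}\in\mathbb{F}_{q^{2}}^{\,n-k}$ set $\mathbf{y}=\mathbf{x}H\in\mathcal{C}^{\perp_{H}}$. Then the $i$-th coordinate of $\mathbf{y}H^{\dag}$ is $\sum_{j}y_{j}h_{ij}^{q}=\langle \mathbf{y},\mathbf{h}_{i}\rangle_{H}$. Hence $\mathbf{x}HH^{\dag}=\mathbf{0}$ holds exactly when $\mathbf{y}$ is Hermitian-orthogonal to every row of $H$, that is, to every element of $\mathcal{C}^{\perp_{H}}$; equivalently $\mathbf{y}\in (\mathcal{C}^{\perp_{H}})^{\perp_{H}}=\mathcal{C}$. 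Combined with $\mathbf{y}\in\mathcal{C}^{\perp_{H}}$, this forces $\mathbf{y}\in \mathcal{C}\cap\mathcal{C}^{\perp_{H}}=Hull_{H}(\mathcal{C})$.

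Because $\phi$ is injective, it restricts to a linear isomorphism between $\ker_{L}(HH^{\dag})$ and $Hull_{H}(\mathcal{C})$. Therefore $\dim\ker_{L}(HH^{\dag})=\dim Hull_{H}(\mathcal{C})$, which yields the first equality
\[
\mathrm{rank}(HH^{\dag})=n-k-\dim Hull_{H}(\mathcal{C}).
\]
The second equality is immediate from the identity $Hull_{H}(\mathcal{C})=Hull_{H}(\mathcal{C}^{\perp_{H}})$ recorded in Section~\ref{sec2}, since both codes have the same intersection with each other.

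The only delicate point is bookkeeping with the Hermitian structure: one must be careful that conjugate-transposing $H$ converts the matrix product into the Hermitian inner product with the \emph{second} argument conjugated (matching $\langle\mathbf{u},\mathbf{v}\rangle_{H}=\sum u_{i}v_{i}^{q}$), and that $(\mathcal{C}^{\perp_{H}})^{\perp_{H}}=\mathcal{C}$ is available in the $q^{2}$-ary setting. Once these conventions are fixed, the argument is a one-line kernel computation.
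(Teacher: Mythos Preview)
The paper does not supply its own proof of this lemma; it is simply quoted from \cite{GJG18}. Your rank--nullity argument identifying $\ker_{L}(HH^{\dag})$ with the Hermitian hull is the standard proof and is correct.

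One minor bookkeeping point worth flagging: with the usual (Euclidean) meaning of ``parity check matrix,'' the rows of $H$ span $\mathcal{C}^{\perp_{E}}$ rather than $\mathcal{C}^{\perp_{H}}$. Your computation then yields $\mathbf{y}\in\mathcal{C}^{\perp_{E}}\cap(\mathcal{C}^{\perp_{E}})^{\perp_{H}}=\mathcal{C}^{\perp_{E}}\cap\overline{\mathcal{C}}=\overline{Hull_{H}(\mathcal{C})}$, which has the same dimension as $Hull_{H}(\mathcal{C})$, so the conclusion is unaffected. If instead one adopts the Hermitian convention for the parity check matrix (as you implicitly do), your argument goes through verbatim.
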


Since the Hermitian dual code of an $[n, k, n-k+1]$-MDS code is an $[n, n-k, k+1]$-MDS code, we immediately obtain the following result from Lemmas \ref{lem4.1}, \ref{lem4.2} and \ref{lem4.3}.
\begin{lemma}\label{lem4.4}
Let $\mathcal{C}$ be an $[n,k]$-MDS code over $\mathbb{F}_{q^{2}}$ and $\ell =\dim(Hull_{H}(\mathcal{C}))$. If $k \leq \frac{n}{2}$, then there
exists an $[[n, n- k - \ell, k+1;  k - \ell]]_{q}$ MDS EAQECC.
\end{lemma}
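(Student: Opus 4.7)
The plan is to chain together Lemmas \ref{lem4.1}, \ref{lem4.2}, and \ref{lem4.3}, but applied to the Hermitian dual $\mathcal{C}^{\perp_{H}}$ rather than to $\mathcal{C}$ itself. The reason is that the target EAQECC has minimum distance $k+1$, which is the distance of $\mathcal{C}^{\perp_{H}}$ (not of $\mathcal{C}$, whose distance is $n-k+1$), so Lemma \ref{lem4.2} must be fed the dual in order for the quantum code to inherit the correct minimum distance.

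First, I would observe that since $\mathcal{C}$ is an $[n,k]$-MDS code over $\mathbb{F}_{q^{2}}$, its Hermitian dual $\mathcal{C}^{\perp_{H}}$ is an $[n,n-k,k+1]$-MDS code. Take a parity-check matrix $H^{*}$ of $\mathcal{C}^{\perp_{H}}$ (equivalently, a generator matrix of $\mathcal{C}$). Applying Lemma \ref{lem4.2} to $\mathcal{C}^{\perp_{H}}$ with this matrix yields an $[[n,\,2(n-k)-n+c,\,k+1;\,c]]_{q}=[[n,\,n-2k+c,\,k+1;\,c]]_{q}$ EAQECC, where $c=\textnormal{rank}(H^{*}(H^{*})^{\dagger})$. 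I would then invoke Lemma \ref{lem4.3} for $\mathcal{C}^{\perp_{H}}$, together with the identity $Hull_{H}(\mathcal{C})=Hull_{H}(\mathcal{C}^{\perp_{H}})$ recorded in Section \ref{sec2}, to get $c=n-(n-k)-\dim(Hull_{H}(\mathcal{C}^{\perp_{H}}))=k-\ell$. Substituting yields the parameters $[[n,\,n-k-\ell,\,k+1;\,k-\ell]]_{q}$ claimed in the statement.

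Finally, I must confirm that this EAQECC is MDS. The hypothesis $k\leq n/2$ is equivalent to $k+1\leq (n+2)/2$, so the quantum Singleton bound of Lemma \ref{lem4.1} is the relevant one. A direct check shows $n+(k-\ell)-(n-k-\ell)=2k=2\bigl((k+1)-1\bigr)$, so the bound is attained with equality and the EAQECC is MDS. There is essentially no analytical obstacle here, as the argument is pure bookkeeping on top of Lemmas \ref{lem4.1}--\ref{lem4.3}; the only subtlety worth flagging is to apply Lemma \ref{lem4.2} to $\mathcal{C}^{\perp_{H}}$ rather than to $\mathcal{C}$, and to convert hull dimensions through the identity $Hull_{H}(\mathcal{C})=Hull_{H}(\mathcal{C}^{\perp_{H}})$ so that the entanglement parameter comes out as $k-\ell$ in terms of the data given.
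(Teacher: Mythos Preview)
Your proposal is correct and follows essentially the same approach as the paper, which simply states that the result is immediate from Lemmas \ref{lem4.1}--\ref{lem4.3} once one passes to the Hermitian dual $[n,n-k,k+1]$-MDS code. You have correctly identified the key point---that Lemma \ref{lem4.2} must be applied to $\mathcal{C}^{\perp_{H}}$ to obtain distance $k+1$---and filled in the bookkeeping the paper leaves implicit.
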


\begin{remark}\label{rem4.5}
The required number $c$ of maximally entangled states of the MDS EAQECCs constructed in Lemma \ref{lem4.4} satisfies that $0 \leq c=k-\ell \leq k$. When $c=0$, then $\dim(Hull_{H}(\mathcal{C}))=\ell=k$, i.e., $\mathcal{C} \subseteq \mathcal{C}^{\perp_{H}}$. At this time, the MDS EAQECC is equivalent to an MDS quantum stabilizer code and Lemma \ref{lem4.4} is equivalent to the well-known CSS Construction (for MDS quantum stabilizer code).  When $c=k$, then $\dim(Hull_{H}(\mathcal{C}))=\ell=0$, i.e., $\mathcal{C}$ is Hermitian LCD code.  In \cite{CMTQP18}, the authors completely determined the $q$-ary Hermitian LCD codes for $q >2$.
\end{remark}

By Lemma \ref{lem4.4} and Theorem \ref{thm3.5}, we can directly construct the $q$-ary MDS EAQECCs of length $n \leq q$, which have also been obtained in \cite{GGJT18} with different approach.

\begin{theorem}\label{thm4.6}
Let $q$ be a prime power. Assume that $1 < n \leq q$. Then for any $1 \leq k \leq \frac{n}{2}$ and $0 \leq \ell \leq k$, there exists an $[[n, n- k - \ell, k+1;  k - \ell]]_{q}$ MDS EAQECC.
\end{theorem}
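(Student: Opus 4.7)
The plan is to combine Theorem \ref{thm3.5} and Lemma \ref{lem4.4} directly. The target parameters $[[n, n-k-\ell, k+1; k-\ell]]_q$ in Theorem \ref{thm4.6} are precisely the parameters produced by Lemma \ref{lem4.4} from an $[n,k]_{q^2}$-MDS code whose Hermitian hull has dimension $\ell$, provided $k \leq n/2$. Hence it suffices to exhibit such a $q^2$-ary MDS code for every admissible triple $(n,k,\ell)$.

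First, I would invoke Theorem \ref{thm3.5} with the prescribed values of $n$, $k$, and $\ell$. Under the hypotheses of Theorem \ref{thm4.6} we have $1 < n \leq q$, $1 \leq k \leq n/2$, and $0 \leq \ell \leq k$; since $k$ is an integer, $k \leq n/2$ is equivalent to $k \leq \lfloor n/2 \rfloor$, so all hypotheses of Theorem \ref{thm3.5} are met. This produces a $q^2$-ary $[n,k]$-MDS code $\mathcal{C}$ with $\dim(Hull_H(\mathcal{C})) = \ell$.

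Next, I would feed $\mathcal{C}$ into Lemma \ref{lem4.4}. Since $k \leq n/2$ and $\ell = \dim(Hull_H(\mathcal{C}))$, Lemma \ref{lem4.4} immediately yields an $[[n, n-k-\ell, k+1; k-\ell]]_q$ MDS EAQECC, which is exactly the desired output.

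The only subtlety is that Theorem \ref{thm3.5} assumes $q > 2$, while Theorem \ref{thm4.6} allows arbitrary prime powers $q$. For $q = 2$ the constraints $1 < n \leq q$ force $n = 2$, $k = 1$, and $\ell \in \{0,1\}$; these few degenerate cases can be verified by inspection. Beyond this bookkeeping there is no real obstacle: the substantive work has already been carried out in Theorem \ref{thm3.5}, and the present theorem is a clean corollary of it together with the Brun--Devetak--Hsieh construction encoded in Lemma \ref{lem4.4}.
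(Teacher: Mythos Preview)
Your proposal is correct and matches the paper's own approach exactly: the paper simply states that the result follows directly from Lemma \ref{lem4.4} and Theorem \ref{thm3.5}. Your observation about the $q=2$ edge case is a nice bit of extra care that the paper itself does not address.
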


\begin{remark}
From Theorem \ref{thm4.6}, we have completely determined the $q$-ary MDS EAQECCs of length $n \leq q$ for all possible parameters. This result has also already been given in \cite{GGJT18} via linear codes over $\mathbb{F}_{q}$ with Euclidean inner product. Herein, we use the linear codes over $\mathbb{F}_{q^{2}}$ with Hermitian inner product.
\end{remark}

Similarly, from Theorem \ref{thm3.6} and Theorems \ref{thm3.8}-\ref{thm3.11}, we obtain the following six families of MDS EAQECCs with flexible parameters.

\begin{theorem}\label{thm4.8}
Suppose  $q=p^{m} \geq 3$ and $r=p^{e}$, where $e \mid m$. Let $n=tr^{z}$, where $1 \leq t \leq r$ and $1 \leq z \leq 2\frac{m}{e}-1$.
Then for any $1 \leq k \leq \lfloor\frac{n-1+q}{q+1}\rfloor$,
  \begin{description}
    \item[(i)] if $0 \leq \ell \leq k$, there exists an $[[n, n- k - \ell, k+1;  k - \ell]]_{q}$ MDS EAQECC;
    \item[(ii)] if $0 \leq \ell \leq k-1$, there exists an $[[n+1, n+1-k- \ell, k+1;  k - \ell]]_{q}$ MDS EAQECC.
  \end{description}
\end{theorem}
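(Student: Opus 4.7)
The plan is to deduce Theorem~\ref{thm4.8} as a one-line combination of two results proved earlier in the paper: Theorem~\ref{thm3.6}, which supplies $q^{2}$-ary MDS codes whose Hermitian hulls realize every admissible dimension, and Lemma~\ref{lem4.4}, which converts such a code into an MDS EAQECC with the corresponding parameters. No fresh combinatorics is required; only a short inequality check is needed to confirm that the hypotheses of Lemma~\ref{lem4.4} are in force.

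For part (i), I would fix any $k$ with $1\leq k\leq \lfloor (n-1+q)/(q+1)\rfloor$ and any $\ell$ with $0\leq \ell\leq k$, and invoke Theorem~\ref{thm3.6}(i) to obtain a $q^{2}$-ary $[n,k]$-MDS code $\mathcal{C}$ satisfying $\dim(Hull_{H}(\mathcal{C}))=\ell$. Feeding $\mathcal{C}$ into Lemma~\ref{lem4.4} then delivers the desired $[[n,n-k-\ell,k+1;k-\ell]]_{q}$ MDS EAQECC. Part (ii) proceeds identically, now using Theorem~\ref{thm3.6}(ii) to produce an $[n+1,k]$-MDS code with Hermitian hull of the prescribed dimension $\ell\in\{0,1,\ldots,k-1\}$, and then applying Lemma~\ref{lem4.4} (with the code length taken to be $n+1$) to obtain the advertised $[[n+1,n+1-k-\ell,k+1;k-\ell]]_{q}$ MDS EAQECC.

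The only step I expect to require any thought is the Singleton-regime condition $k\leq \lfloor n/2\rfloor$ (respectively $k\leq \lfloor(n+1)/2\rfloor$) imposed by Lemma~\ref{lem4.4}. Clearing denominators in $(n-1+q)/(q+1)\leq n/2$ reduces to $2(q-1)\leq n(q-1)$, that is, $n\geq 2$, which is automatic because $n=tr^{z}\geq r=p^{e}\geq 2$; the same estimate trivially covers the length-$(n+1)$ case since $\lfloor n/2\rfloor\leq \lfloor(n+1)/2\rfloor$. I therefore anticipate no genuine obstacle: the theorem will follow by concatenating Theorem~\ref{thm3.6}, this elementary inequality, and Lemma~\ref{lem4.4}.
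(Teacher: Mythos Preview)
Your proposal is correct and follows exactly the route the paper takes: the paper states Theorem~\ref{thm4.8} as an immediate consequence of Theorem~\ref{thm3.6} combined with Lemma~\ref{lem4.4}, with no additional argument. Your extra verification that $k\leq\lfloor(n-1+q)/(q+1)\rfloor$ forces $k\leq n/2$ (via $n\geq 2$) is sound and makes explicit a hypothesis check the paper leaves implicit.
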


\begin{theorem}\label{thm4.9}
Let $q>2$ be a prime power and $n' \mid (q^{2}-1)$. Let $n=tn'$, where $1 \leq t \leq \frac{q-1}{n_{1}}$ and $n_{1}=\frac{n'}{\gcd(n',q+1)}$. Then for any $1 \leq k \leq \lfloor\frac{n+q}{q+1}\rfloor$,
\begin{description}
  \item[(i)] if $0 \leq \ell \leq k-1$, there exists an $[[n, n- k - \ell, k+1;  k - \ell]]_{q}$ MDS EAQECC;
  \item[(ii)]if $0 \leq \ell \leq k$, there exists an $[[n+1, n+1- k - \ell, k+1;  k - \ell]]_{q}$ MDS EAQECC;
  \item[(iii)] if $0 \leq \ell \leq k-1$, there exists an $[[n+2, n+2- k - \ell, k+1;  k - \ell]]_{q}$ MDS EAQECC.
\end{description}
\end{theorem}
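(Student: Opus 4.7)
The plan is to obtain Theorem~\ref{thm4.9} as an immediate corollary of Lemma~\ref{lem4.4} combined, respectively, with Theorems~\ref{thm3.8}, \ref{thm3.9}, and \ref{thm3.10}. Each of those three theorems already supplies a $q^{2}$-ary MDS code of length $n$, $n+1$, or $n+2$ (respectively) whose Hermitian hull achieves every admissible dimension $\ell$ in the stated range. Feeding such a code into Lemma~\ref{lem4.4} immediately produces an MDS EAQECC with the claimed parameters and $c=k-\ell$ maximally entangled states, so no genuinely new construction is required here.

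Concretely, for part (i) I would take the $[n,k]$-MDS code $\mathcal{C}$ with $\dim(Hull_{H}(\mathcal{C}))=\ell$ guaranteed by Theorem~\ref{thm3.8} (valid for $1 \leq k \leq \lfloor\tfrac{n+q}{q+1}\rfloor$ and $0 \leq \ell \leq k-1$); applying Lemma~\ref{lem4.4} converts it into the $[[n,\,n-k-\ell,\,k+1;\,k-\ell]]_{q}$ MDS EAQECC. Parts (ii) and (iii) follow the same template: Theorem~\ref{thm3.9} supplies an $[n+1,k]$-MDS code with Hermitian hull of any dimension $0 \leq \ell \leq k$, and Theorem~\ref{thm3.10} supplies an $[n+2,k]$-MDS code with Hermitian hull of any dimension $0 \leq \ell \leq k-1$; in each case Lemma~\ref{lem4.4} delivers the corresponding MDS EAQECC in one step. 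Note that the full range $0 \leq \ell \leq k$ in part (ii) includes the borderline value $\ell=k$, which gives $c=0$ and recovers a Hermitian self-orthogonal MDS code, i.e.\ a genuine quantum stabilizer MDS code.

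The only technical point that requires attention is verifying that the hypothesis $k \leq N/2$ of Lemma~\ref{lem4.4}, where $N$ denotes the code length, remains in force under the bound $k \leq \lfloor\tfrac{n+q}{q+1}\rfloor$ inherited from Section~3; this is what guarantees $d=k+1 \leq (N+2)/2$ so that the quantum Singleton bound applies and the output code is MDS. A short arithmetic check gives $\tfrac{n+q}{q+1} \leq \tfrac{n}{2}$ whenever $n \geq \tfrac{2q}{q-1}$, which is essentially automatic for $q \geq 3$ and the admissible lengths $n=tn'$ with $n' \mid q^{2}-1$; the analogous inequalities needed for lengths $n+1$ and $n+2$ in parts (ii) and (iii) are strictly weaker, and reduce to $n \geq 1$. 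Beyond this routine bookkeeping, the argument is merely a matter of citing the Section~3 constructions and the black-box conversion in Lemma~\ref{lem4.4} in the correct order.
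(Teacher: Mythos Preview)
Your proposal is correct and matches the paper's approach exactly: the paper states Theorem~\ref{thm4.9} with no proof beyond the sentence ``Similarly, from Theorem~\ref{thm3.6} and Theorems~\ref{thm3.8}--\ref{thm3.11}, we obtain the following six families of MDS EAQECCs,'' i.e.\ it is obtained by plugging Theorems~\ref{thm3.8}, \ref{thm3.9}, \ref{thm3.10} into Lemma~\ref{lem4.4}. Your additional check that $k\le N/2$ holds under the bound $k\le\lfloor (n+q)/(q+1)\rfloor$ is a welcome bit of bookkeeping that the paper leaves implicit.
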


\begin{theorem}\label{thm4.10}
Let $q>2$ be a prime power. Then for any $0 \leq \ell \leq q$, there exists a $[[q^{2}+1, q^{2}+1- q - \ell, q+1;  q - \ell]]_{q}$ MDS EAQECC.
\end{theorem}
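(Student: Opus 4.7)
The plan is to obtain Theorem \ref{thm4.10} as an immediate combination of Theorem \ref{thm3.11} (which supplies a $q^{2}$-ary $[q^{2}+1,q]$ MDS code whose Hermitian hull has any prescribed dimension $\ell$ with $0\le \ell\le q$) and Lemma \ref{lem4.4} (which converts such a code into an MDS EAQECC). So the proof will be essentially one paragraph: invoke the former, check that the hypotheses of the latter apply, then read off the parameters.

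First, I would fix $\ell$ with $0\le \ell\le q$ and appeal to Theorem \ref{thm3.11} to produce a $q^{2}$-ary $[q^{2}+1,q]$ MDS code $\mathcal{C}$ with $\dim\!\left(Hull_{H}(\mathcal{C})\right)=\ell$. Next, I would verify that Lemma \ref{lem4.4} may be applied to this $\mathcal{C}$: with $n=q^{2}+1$ and $k=q$, the inequality $k\le n/2$ amounts to $2q\le q^{2}+1$, i.e. $(q-1)^{2}\ge 0$, which holds for every prime power $q>2$. (Equivalently, the Singleton bound hypothesis $d=k+1=q+1\le (n+2)/2=(q^{2}+3)/2$ from Lemma \ref{lem4.1} is satisfied.)

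Finally, applying Lemma \ref{lem4.4} to $\mathcal{C}$ with this value of $\ell$ yields an
\[[[\,n,\;n-k-\ell,\;k+1;\;k-\ell\,]]_{q}=[[\,q^{2}+1,\;q^{2}+1-q-\ell,\;q+1;\;q-\ell\,]]_{q}\]
MDS EAQECC, exactly as claimed. Because $\ell$ was arbitrary in the range $0\le \ell\le q$, the required number of maximally entangled states $c=q-\ell$ also sweeps through every integer in $\{0,1,\dots,q\}$, completing the argument.

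There is essentially no obstacle here: all the real work has been done in establishing Theorem \ref{thm3.11}, where the combinatorial identity $u_{i}=-1$ for every $a_{i}\in\mathbb{F}_{q^{2}}$ and the careful choice of the twist vector $\textbf{v}=(v_{1},\dots,v_{s},1,\dots,1)$ with $v_{i}^{q+1}\ne 1$ allow the Hermitian hull to be pinned down to the prescribed dimension. Once Theorem \ref{thm3.11} is in hand, Theorem \ref{thm4.10} is a one-line corollary, and the only point to check is the elementary dimension inequality $k\le n/2$.
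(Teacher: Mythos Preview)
Your proposal is correct and follows exactly the paper's approach: the paper states that Theorems \ref{thm4.8}--\ref{thm4.10} are obtained ``similarly'' from Theorems \ref{thm3.6} and \ref{thm3.8}--\ref{thm3.11} via Lemma \ref{lem4.4}, and your argument does precisely this for Theorem \ref{thm4.10}, invoking Theorem \ref{thm3.11} and checking the hypothesis $k\le n/2$ of Lemma \ref{lem4.4}.
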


\begin{remark}\label{rem4.11}
In \cite{QZ18, FCX16, LLGML18, LZLK19, CZK18, LLLM18, K19, LC19}, the authors provided several constructions of MDS EAQECCs with fixed required number of maximally entangled states. In \cite{LCC} and \cite{GGJT18}, the authors constructed several families of  $q$-ary MDS EAQECCs with length less than or equal to $q+1$ and the required number of maximally entangled states can take all or almost all possible values. In our Theorem \ref{thm4.6} and Theorems \ref{thm4.8}-\ref{thm4.10}, we provide several classes of MDS EAQECCs with flexible parameters. Moreover, the lengths of these $q$-ary EAQECCs can be larger than $q+1$ and the required number of maximally entangled states can also take arbitrarily possible values. Hence many new MDS EAQECCs are obtained.
\end{remark}

In the following, in order to illustrate our results obtained in Theorems \ref{thm4.8}-\ref{thm4.10}, we list some examples of $q$-ary MDS EAQECCs with length larger than $q+1$ in Tables 1-3.

\begin{table}[htbp]
\centering
\caption{Examples of MDS EAQECCs of Theorem \ref{thm4.8} (\textnormal{i}) for $q=9$ and $n=72$ ($t=8, r=9, z=1$)}
\begin{tabular}{ccc|ccc}
  \hline
  % after \\: \hline or \cline{col1-col2} \cline{col3-col4} ...
  $k$ &$\ell$& MDS EAQECCs &$k$ &$\ell$& MDS EAQECCs \\
  \hline
  3 &1& $[[72,68,4;2]]_{9}$ &6 &2& $[[72,64,7;4]]_{9}$\\
  3 &2& $[[72,67,4;1]]_{9}$ &6 &3& $[[72,63,7;3]]_{9}$\\
  4 &1& $[[72,67,5;3]]_{9}$ &6 &4& $[[72,62,7;2]]_{9}$\\
  4 &2& $[[72,66,5;2]]_{9}$ &6 &5& $[[72,61,7;1]]_{9}$\\
  4 &3& $[[72,65,5;1]]_{9}$ &7 &1& $[[72,64,8;6]]_{9}$\\
  5 &1& $[[72,66,6;4]]_{9}$ &7 &2& $[[72,63,8;5]]_{9}$\\
  5 &2& $[[72,65,6;3]]_{9}$ &7 &3& $[[72,62,8;4]]_{9}$\\
  5 &3& $[[72,64,6;2]]_{9}$ &7 &4& $[[72,61,8;3]]_{9}$\\
  5 &4& $[[72,63,6;1]]_{9}$ &7 &5& $[[72,60,8;2]]_{9}$\\
  6 &1& $[[72,65,7;5]]_{9}$ &7 &6& $[[72,59,8;1]]_{9}$ \\
  \hline
\end{tabular}
\end{table}

\begin{table}[htbp]
\centering
\caption{Examples of MDS EAQECCs of Theorem \ref{thm4.9} (\textnormal{i}) for $q=11$ and $n=96$ ($t=8, n'=12$)}
\begin{tabular}{ccc|ccc}
  \hline
  % after \\: \hline or \cline{col1-col2} \cline{col3-col4} ...
  $k$ &$\ell$& MDS EAQECCs & $k$ &$\ell$& MDS EAQECCs \\
  \hline
  2 &1& $[[96,93,3;1]]_{11}$ & 6 &4& $[[96,86,7;2]]_{11}$ \\
  3 &1& $[[96,92,4;2]]_{11}$ & 6 &5& $[[96,85,7;1]]_{11}$ \\
  3 &2& $[[96,91,4;1]]_{11}$ & 7 &1& $[[96,88,8;6]]_{11}$ \\
  4 &1& $[[96,91,5;3]]_{11}$ & 7 &2& $[[96,87,8;5]]_{11}$ \\
  4 &2& $[[96,90,5;2]]_{11}$ & 7 &3& $[[96,86,8;4]]_{11}$ \\
  4 &3& $[[96,89,5;1]]_{11}$ & 7 &4& $[[96,85,8;3]]_{11}$ \\
  5 &1& $[[96,90,6;4]]_{11}$ & 8 &1& $[[96,87,9;7]]_{11}$ \\
  5 &2& $[[96,89,6;3]]_{11}$ & 8 &2& $[[96,86,9;6]]_{11}$ \\
  5 &3& $[[96,88,6;2]]_{11}$ & 8 &3& $[[96,85,9;5]]_{11}$ \\
  5 &4& $[[96,87,6;1]]_{11}$ & 8 &4& $[[96,84,9;4]]_{11}$ \\
  6 &1& $[[96,89,7;5]]_{11}$ & 8 &5& $[[96,83,9;3]]_{11}$ \\
  6 &2& $[[96,88,7;4]]_{11}$ & 8 &6& $[[96,82,9;2]]_{11}$ \\
  6 &3& $[[96,87,7;3]]_{11}$ & 8 &7& $[[96,81,9;1]]_{11}$ \\
  \hline
\end{tabular}
\end{table}

\begin{table}[htbp]
\centering
\caption{Examples of MDS EAQECCs of Theorem \ref{thm4.10} for $q=5,7$ and 13}
\begin{tabular}{ccc|ccc}
  \hline
  % after \\: \hline or \cline{col1-col2} \cline{col3-col4} ...
  $q$ &$\ell$& MDS EAQECCs & $q$ &$\ell$& MDS EAQECCs\\
  \hline
  5 &1& $[[26,20,6;4]]_{5}$ &13 &2& $[[170,151,14;11]]_{13}$\\
  5 &2& $[[26,19,6;3]]_{5}$ &13 &3& $[[170,150,14;10]]_{13}$\\
  5 &3& $[[26,18,6;3]]_{5}$ &13 &4& $[[170,149,14;9]]_{13}$\\
  5 &4& $[[26,17,6;1]]_{5}$ &13 &5& $[[170,148,14;8]]_{13}$\\
  7 &1& $[[50,42,8;6]]_{7}$ &13 &6& $[[170,147,14;7]]_{13}$\\
  7 &2& $[[50,41,8;5]]_{7}$ &13 &7& $[[170,146,14;6]]_{13}$\\
  7 &3& $[[50,40,8;4]]_{7}$ &13 &8& $[[170,145,14;5]]_{13}$\\
  7 &4& $[[50,39,8;3]]_{7}$ &13 &9& $[[170,144,14;4]]_{13}$\\
  7 &5& $[[50,38,8;2]]_{7}$ &13 &10& $[[170,143,14;3]]_{13}$\\
  7 &6& $[[50,37,8;1]]_{7}$ &13 &11& $[[170,142,14;2]]_{13}$\\
  13 &1& $[[170,152,14;12]]_{13}$ &13 &12& $[[170,141,14;1]]_{13}$\\
  \hline
\end{tabular}
\end{table}

\begin{remark}
According to Remark \ref{rem4.11}, we do not list the MDS EAQECCs with $\ell=0$ and $\ell=k$ in Tables 1-3.
\end{remark}
\section{Conclusion}
In this paper, we study the hull of a linear code with both Euclidean and Hermitian inner products. We employ some additive subgroups of the finite field $\mathbb{F}_{q}$ (or $\mathbb{F}_{q^{2}}$) and multiplicative subgroups of $\mathbb{F}^{*}_{q}$ (or $\mathbb{F}^{*}_{q^{2}}$) and their cosets to construct the desired GRS codes and extended GRS codes. Then we can determine the dimensions of the Euclidean or Hermitian hulls of these codes. Several families of MDS codes with  Euclidean or Hermitian hulls of arbitrary dimensions were thus obtained. Finally, we apply these results to construct several new classes of MDS EAQECCs with flexible parameters.  In particular, several classes of $q$-ary MDS EAQECCs with length $n >q+1$ are also constructed. Note that in Theorem \ref{thm3.6} and Theorems \ref{thm3.8}-\ref{thm3.11}, the dimension $k$ of the $q^{2}$-ary MDS codes of length $n$ is roughly bounded by $\lfloor\frac{n+q}{q+1}\rfloor$. Therefore, constructing suitable $q^{2}$-ary MDS codes of larger dimension $\lfloor\frac{n+q}{q+1}\rfloor < k \leq \frac{n}{2}$ and determining the dimensions of their Hermitian hulls will be one of research directions in our future work.

\vskip 5mm \noindent {\bf Acknowledgments} We sincerely thank Professor Xiwang Cao for his helpful suggestions and comments.

\end{document}